\newcommand{\commentout}[1]{}
\newcommand{\junk}[1]{}
\newcommand{\etal}{\emph{et al.}}
\newcommand{\naturalset}{\mathbb{N}}
\newcommand{\domain}[1]{\mathrm{dom}\left(#1\right)}
\newcommand{\parents}{\mathsf{pa}}
\newcommand{\abs}[1]{\left|#1\right|}
\newcommand{\EE}[1]{\mathbb{E} \left[#1\right]}
\newcommand{\I}[1]{\mathds{1} \! \left\{#1\right\}}
\newcommand{\set}[1]{\left\{#1\right\}}
\DeclareMathOperator*{\argmax}{arg\,max\,}
\DeclareMathOperator*{\median}{median\,}
\mathchardef\mhyphen="2D
\newcommand{\gmfactorsketch}{{\tt GMFactorSketch}}
\newcommand{\gmhash}{{\tt GMHash}}
\newcommand{\gmsketch}{{\tt GMSketch}}
\newcommand{\Pcm}{P_\textsc{cm}}
\newcommand{\Pml}{\tilde{P}}
\newcommand{\Pmlg}{\bar{P}}
\begin{document}

\title{Graphical Model Sketch}

\author{Branislav Kveton\inst{1} \Letter \and
Hung Bui\inst{2} \and
Mohammad Ghavamzadeh\inst{3} \and
Georgios Theocharous\inst{4} \and
S. Muthukrishnan\inst{5} \and
Siqi Sun\inst{6}}

\authorrunning{Kveton et al.}

\institute{Adobe Research, San Jose, CA \hfill \texttt{kveton@adobe.com} \and
Adobe Research, San Jose, CA \hfill \texttt{hubui@adobe.com} \and
Adobe Research, San Jose, CA \hfill \texttt{ghavamza@adobe.com} \and
Adobe Research, San Jose, CA \hfill \texttt{theochar@adobe.com} \and
Department of Computer Science, Rutgers, NJ \hfill \texttt{muthu@cs.rutgers.edu} \and
TTI, Chicago, IL \hfill \texttt{siqi.sun@ttic.edu}}

\maketitle

\begin{abstract}
Structured high-cardinality data arises in many domains, and poses a major challenge for both modeling and inference. Graphical models are a popular approach to modeling structured data but they are unsuitable for high-cardinality variables. The count-min (CM) sketch is a popular approach to estimating probabilities in high-cardinality data but it does not scale well beyond a few variables. In this work, we bring together the ideas of graphical models and count sketches; and propose and analyze several approaches to estimating probabilities in structured high-cardinality streams of data. The key idea of our approximations is to use the structure of a graphical model and approximately estimate its factors by ``sketches'', which hash high-cardinality variables using random projections. Our approximations are computationally efficient and their space complexity is independent of the cardinality of variables. Our error bounds are multiplicative and significantly improve upon those of the CM sketch, a state-of-the-art approach to estimating probabilities in streams. We evaluate our approximations on synthetic and real-world problems, and report an order of magnitude improvements over the CM sketch.
\end{abstract}

%!TEX root = Paper.tex

\section{Introduction}
\label{sec:introduction}

Structured high-cardinality data arises in numerous domains, and poses a major challenge for modeling and inference. A common goal in online advertising is to estimate the probability of events, such as page views, over multiple high-cardinality variables, such as the location of the user, the referring page, and the purchased product. A common goal in natural language processing is to estimate the probability of $n$-grams over a dictionary of $100\text{k}$ words. Graphical models \cite{jensen96introduction} are a popular approach to modeling multivariate data. However, when the cardinality of random variables is high, they are expensive to store and reason with. For instance, a graphical model over two variables with $M = 10^5$ values each may consume $M^2 = 10^{10}$ space.

A \emph{sketch} \cite{muthukrishnan05data} is a data structure that summarizes streams of data such that any two sketches of individual streams can be combined space efficiently into the sketch of the combined stream. Numerous problems can be solved efficiently by surprisingly simple sketches, such as estimating the frequency of values in streams \cite{misra82finding,charikar04finding,cormode05improved}, finding heavy hitters \cite{cormode05what}, estimating the number of unique values \cite{flajolet85probabilistic,flajolet07hyperloglog}, or even approximating low-rank matrices \cite{liberty13simple,woodruff14low}. In this work, we sketch a graphical model in a small space. Let $(x^{(t)})_{t = 1}^n$ be a stream of $n$ observations from some distribution $P$, where $x^{(t)} \in [M]^K$ is a $K$-dimensional vector and $P$ factors according to a known graphical model $\mathcal{G}$. Let $\Pmlg$ be the maximum-likelihood estimate (MLE) of $P$ from $(x^{(t)})_{t = 1}^n$ conditioned on $\mathcal{G}$. Then our goal is to approximate $\Pmlg$ with $\hat{P}$ such that $\hat{P}(x) \approx \Pmlg(x)$ for any $x \in [M]^K$ with at least $1 - \delta$ probability; in the space that does not depend on the cardinality $M$ of the variables in $\mathcal{G}$. In our motivating examples, $x$ is an $n$-gram or the feature vector associated with page views.

This paper makes three contributions. First, we propose and carefully analyze three natural approximations to the MLE in graphical models with high-cardinality variables. The key idea of our approximations is to leverage the structure of the graphical model $\mathcal{G}$ and approximately estimate its factors by ``sketches''. Therefore, we refer to our approximations as \emph{graphical model sketches}. Our best approximation, $\gmfactorsketch$, guarantees that $\hat{P}(x)$ is a constant-factor multiplicative approximation to $\Pmlg(x)$ for any $x$ with probability of at least $1 - \delta$ in $O(K^2 \log(K / \delta) \Delta^{-1}(x))$ space, where $K$ is the number of variables and $\Delta(x)$ measures the hardness of query $x$. The dependence on $\Delta(x)$ is generally unavoidable and we show this in \cref{sec:lower bound}. Second, we prove that $\gmfactorsketch$ yields better approximations than the count-min (CM) sketch \cite{cormode05improved}, a state-of-the-art approach to estimating the frequency of values in streams (\cref{sec:comparison}). Third, we evaluate our approximations on both synthetic and real-world problems. Our results show that $\gmfactorsketch$ outperforms the CM sketch and our other approximations, as measured by the error in estimating $\Pmlg$ at the same space.

Our work is related to Matusevych \etal~\cite{matusevych12hokusai}, who proposed several extensions of the CM sketch, one of which is $\gmfactorsketch$. This approximation is not analyzed and it is evaluated only on a graphical model with three variables. We present the first analysis of $\gmfactorsketch$, and prove that it is superior to other natural approximations and the CM sketch. We also evaluate $\gmfactorsketch$ on an order of magnitude larger problems than Matusevych \etal~\cite{matusevych12hokusai}. McGregor and Vu \cite{mcgregor15evaluating} proposed and analyzed a space-efficient streaming algorithm that tests if the stream of data is consistent with a graphical model. Several recent papers applied hashing to speeding up inference in graphical models \cite{ermon13taming,belle15hashing}. These papers do not focus on high-cardinality variables and are only loosely related to our work, because of using hashing in graphical models. We also note that the problem of representing conditional probabilities in graphical models efficiently has been studied extensively, as early as in Boutilier \etal~\cite{boutilier96contextspecific}. Our paper is different from this line of work because we do not assume any sparsity or symmetry in data; and our approximations are suitable for the streaming setting.

We denote $\set{1, \dots, K}$ by $[K]$. The cardinality of set $A$ is $\abs{A}$. We denote random variables by capital letters, such as $X$, and their values by small letters, such as $x$. We assume that $X = (X_1, \dots, X_K)$ is a $K$-dimensional variable; and we refer to its $k$-th component by $X_k$ and its value by $x_k$.

%!TEX root = Paper.tex

\section{Background}
\label{sec:background}

This section reviews the two main components of our solutions.

\subsection{Count-Min Sketch}
\label{sec:count-min sketch}

Let $(x^{(t)})_{t = 1}^n$ be a stream of $n$ observations from distribution $P$, where $x^{(t)} \in [M]^K$ is a $K$-dimensional vector. Suppose that we want to estimate:
\begin{align}
  \Pml(x) = \frac{1}{n} \sum_{t = 1}^n \I{x = x^{(t)}}\,,
  \label{eq:distribution}
\end{align}
the frequency of observing any $x$ in $(x^{(t)})_{t = 1}^n$. This problem can be solved in $O(M^K)$ space, by counting all unique values in $(x^{(t)})_{t = 1}^n$. This solution is impractical when $K$ and $M$ are large. Cormode and Muthukrishnan \cite{cormode05improved} proposed an approximate solution to this problem, the \emph{count-min (CM) sketch}, which estimates $\Pml(x)$ in the space independent of $M^K$. The sketch consists of $d$ hash tables with $m$ bins, $c \in \naturalset^{d \times m}$. The hash tables are initialized with zeros. At time $t$, they are updated with observation $x^{(t)}$ as:
\begin{align*}
  c(i, y) \gets c(i, y) + \I{y = h^i(x^{(t)})}
\end{align*}
for all $i \in [d]$ and $y \in [m]$, where $h^i: [M]^K \to [m]$ is the $i$-th \emph{hash function}. The hash functions are \emph{random} and \emph{pairwise-independent}. The frequency $\Pml(x)$ is estimated as:
\begin{align}
  \Pcm(x) = \frac{1}{n} \min\nolimits_{i \in [d]} c(i, h^i(x))\,.
  \label{eq:count-min sketch}
\end{align}
Cormode and Muthukrishnan \cite{cormode05improved} showed that $\Pcm(x)$ approximates $\Pml(x)$ for any $x \in [M]^K$, with at most $\varepsilon$ error and at least $1 - \delta$ probability, in $O((1 / \varepsilon) \log(1 / \delta))$ space. Note that the space is independent of $M^K$. We state this result more formally below.

\begin{theorem}
\label{thm:count-min sketch} Let $\Pml$ be the distribution in \eqref{eq:distribution} and $\Pcm$ be its CM sketch in \eqref{eq:count-min sketch}. Let $d = \log(1 / \delta)$ and $m = e / \varepsilon$. Then for any $x \in [M]^K$, $\Pml(x) \leq \Pcm(x) \leq \Pml(x) + \varepsilon$ with at least $1 - \delta$ probability. The space complexity of $\Pcm$ is $(e / \varepsilon) \log(1 / \delta)$.
\end{theorem}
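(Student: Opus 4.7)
The plan is to separately establish the deterministic lower bound $\Pcm(x) \geq \Pml(x)$ and the probabilistic upper bound $\Pcm(x) \leq \Pml(x) + \varepsilon$, then conclude by counting entries.

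For the lower bound, I would unpack the update rule to see that after processing the stream,
\begin{align*}
  c(i, h^i(x)) = \sum_{t = 1}^n \I{h^i(x^{(t)}) = h^i(x)} \geq \sum_{t = 1}^n \I{x^{(t)} = x} = n \Pml(x)
\end{align*}
for every row $i \in [d]$, because $x^{(t)} = x$ implies $h^i(x^{(t)}) = h^i(x)$. Taking the minimum over $i$ and dividing by $n$ yields $\Pcm(x) \geq \Pml(x)$ with probability one, which is the left inequality.

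For the upper bound, I would fix a row $i$ and isolate the ``collision error''
\begin{align*}
  E_i = c(i, h^i(x)) - n \Pml(x) = \sum_{t : x^{(t)} \neq x} \I{h^i(x^{(t)}) = h^i(x)} \geq 0.
\end{align*}
By pairwise independence of $h^i$, each indicator has expectation at most $1/m$, so $\EE{E_i} \leq n(1 - \Pml(x))/m \leq n/m$. Markov's inequality then gives $\Pr[E_i \geq \varepsilon n] \leq 1/(m \varepsilon) = 1/e$ for the choice $m = e / \varepsilon$. The $d$ rows use independently drawn hash functions, so the events $\{E_i \geq \varepsilon n\}$ are independent, and
\begin{align*}
  \Pr\!\left[\min\nolimits_{i \in [d]} E_i \geq \varepsilon n\right] \leq e^{-d} \leq \delta
\end{align*}
when $d = \log(1/\delta)$. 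On the complementary event, some row satisfies $c(i, h^i(x)) \leq n \Pml(x) + \varepsilon n$, hence $\Pcm(x) \leq \Pml(x) + \varepsilon$. The space bound is immediate: we store $d \cdot m = (e / \varepsilon) \log(1 / \delta)$ counters.

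The only delicate step is the single-row tail bound, since everything else is either a direct consequence of the update rule or follows from independence across rows. The key is recognizing that pairwise independence of the hash function suffices to control the expected collision load per bin, so Markov (rather than any sharper concentration tool) already delivers the constant-probability guarantee that is then amplified by the $d$-fold min.
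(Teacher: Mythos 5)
Your proof is correct and is the standard Cormode--Muthukrishnan argument: a deterministic lower bound from the update rule, a per-row Markov bound on the collision mass giving failure probability $1/e$, and amplification to $\delta$ by taking the minimum over $d$ independently drawn rows. The paper does not actually prove this theorem---it imports it from \cite{cormode05improved}---but your single-row tail bound is exactly the content of the paper's \cref{lem:small collision}, which the authors prove the same way (nonnegativity of the collision excess, then Markov with expected collision load at most $1/m$ per item), so your route matches the machinery the paper itself relies on.
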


\noindent The CM sketch is popular because high-quality approximations, with at most $\varepsilon$ error, can be computed in $O(1 / \varepsilon)$ space.\footnote{https://sites.google.com/site/countminsketch/} Other similar sketches, such as Charikar \etal~\cite{charikar04finding}, require $O(1 / \varepsilon^2)$ space.

\subsection{Bayesian Networks}
\label{sec:Bayesian networks}

Graphical models are a popular tool for modeling and reasoning with random variables \cite{koller09probabilistic}, and have many applications in computer vision \cite{murphy04using} and natural language processing \cite{lafferty01conditional}. In this work, we focus on Bayesian networks \cite{jensen96introduction}, which are directed graphical models.

A \emph{Bayesian network} is a probabilistic graphical model that represents conditional independencies of random variables by a directed graph. In this work, we define it as a pair $(\mathcal{G}, \theta)$, where $\mathcal{G}$ is a directed graph and $\theta$ are its parameters. The graph $\mathcal{G} = (V, E)$ is defined by its nodes $V = \set{X_1, \dots, X_K}$, one for each random variable, and edges $E$. For simplicity of exposition, we assume that $\mathcal{G}$ is a \emph{tree} and $X_1$ is its root. We relax this assumption in \cref{sec:model}. Under this assumption, each node $X_k$ for $k \geq 2$ has one parent and the probability of $x = (x_1, \dots, x_K)$ factors as:
\begin{align*}
  P(x) = P_1(x_1) \prod_{k = 2}^K P_k(x_k \mid x_{\parents(k)})\,,
\end{align*}
where $\parents(k)$ is the \emph{index of the parent variable} of $X_k$, and we use shorthands:
\begin{align*}
  P_k(i) = P(X_k = i)\,, \ \ 
  P_k(i, j) = P(X_k = i, X_{\parents(k)} = j)\,, \ \ 
  P_k(i \mid j) = \frac{P_k(i, j)}{P_{\parents(k)}(j)}\,.
\end{align*}
Let $\domain{X_k} = M$ for all $k \in [K]$. Then our graphical model is parameterized by $M$ \emph{prior probabilities} $P_1(i)$, for any $i \in [M]$; and $(K - 1) M^2$ \emph{conditional probabilities} $P_k(i \mid j)$, for any $k \in [K] - \set{1}$ and $i, j \in [M]$.

Let $(x^{(t)})_{t = 1}^n$ be $n$ observations of $X$. Then the \emph{maximum-likelihood estimate (MLE)} of $P$ conditioned on $\mathcal{G}$, $\bar{\theta} = \argmax_\theta P((x^{(t)})_{t = 1}^n \mid \theta, \mathcal{G})$, has a closed-form solution:
\begin{align}
  \Pmlg(x) = \Pmlg_1(x_1) \prod_{k = 2}^K \Pmlg_k(x_k \mid x_{\parents(k)})\,,
  \label{eq:factored distribution}
\end{align}
where we abbreviate $P(X = x \mid \bar{\theta}, \mathcal{G})$ as $\Pmlg(x)$, and define:
\begin{align*}
  \forall i \in [M]: \Pmlg_k(i) & = \frac{1}{n} \sum_{t = 1}^n \I{x^{(t)}_k = i}\,, \\
  \forall i, j \in [M]: \Pmlg_k(i, j) & = \frac{1}{n} \sum_{t = 1}^n \I{x^{(t)}_k = i, x^{(t)}_{\parents(k)} = j}\,, \\
  \forall i, j \in [M]: \Pmlg_k(i \mid j) & = \Pmlg_k(i, j) / \Pmlg_{\parents(k)}(j)\,.
\end{align*}

%!TEX root = Paper.tex

\section{Model}
\label{sec:model}

Let $(x^{(t)})_{t = 1}^n$ be a stream of $n$ observations from distribution $P$, where $x^{(t)} \in [M]^K$ is a $K$-dimensional vector. Our objective is to approximate $\Pmlg(x)$ in \eqref{eq:factored distribution}, the frequency of observing $x$ as given by the MLE of $P$ from $(x^{(t)})_{t = 1}^n$ conditioned on graphical model $\mathcal{G}$. This objective naturally generalizes that of the CM sketch in \eqref{eq:distribution}, which is the MLE of $P$ from $(x^{(t)})_{t = 1}^n$ without any assumptions on the structure of $P$. For simplicity of exposition, we assume that $\mathcal{G}$ is a tree (\cref{sec:Bayesian networks}). Under this assumption, $\Pmlg$ can be represented exactly in $O(K M^2)$ space. This is not feasible in our problems of interest, where typically $M \geq 10^4$.

The key idea in our solutions is to estimate a surrogate parameter $\hat{\theta}$. We estimate $\hat{\theta}$ on the same graphical model as $\bar{\theta}$. The difference is that $\hat{\theta}$ parameterizes a graphical model where each factor is represented by $O(m)$ hashing bins, where $m \ll M^2$. Our proposed models consume $O(K m)$ space, a significant reduction from $O(K M^2)$; and guarantee that $\hat{P}(x) \approx \Pmlg(x)$ for any $x \in [M]^K$ and observations $(x^{(t)})_{t = 1}^n$ up to time $n$, where we abbreviate $P(X = x \mid \hat{\theta}, \mathcal{G})$ as $\hat{P}(x)$. More precisely:
\begin{align}
  \Pmlg(x) \prod_{k = 1}^K [1 - \varepsilon_k] \leq
  \hat{P}(x) \leq
  \Pmlg(x) \prod_{k = 1}^K [1 + \varepsilon_k]
  \label{eq:objective}
\end{align}
for any $x \in [M]^K$ with at least $1 - \delta$ probability, where $\hat{P}$ is factored in the same way as $\Pmlg$. Each term $\varepsilon_k$ is $O(1 / m)$, where $m$ is the number of hashing bins. Therefore, the quality of our approximations improves as $m$ increases. More precisely, if $m$ is chosen such that $\varepsilon_k \leq 1 / K$ for all $k \in [K]$, we get:
\begin{align}
  [2 / (3 e)] \Pmlg(x) \leq \hat{P}(x) \leq e \Pmlg(x)
  \label{eq:e approximation}
\end{align}
for $K \geq 2$ since $\prod_{k = 1}^K (1 + \varepsilon_k) \leq (1 + 1 / K)^K \leq e$ for $K \geq 1$ and $\prod_{k = 1}^K (1 - \varepsilon_k) \geq (1 - 1 / K)^K \geq 2 / (3 e)$ for $K \geq 2$. Therefore, $\hat{P}(x)$ is a constant-factor multiplicative approximation to $\Pmlg(x)$. As in the CM sketch, we do not require that $\hat{P}(x)$ sum up to $1$.

%!TEX root = Paper.tex

\section{Summary of Main Results}
\label{sec:summary}

The main contribution of our work is that we propose and analyze three approaches to the MLE in graphical models with high-cardinality variables. Our first proposed algorithm, $\gmhash$ (\cref{sec:hashing}), approximates $\Pmlg(x)$ as the product of $K - 1$ conditionals and a prior, one for each variable in $\mathcal{G}$. Each conditional is estimated as a ratio of two hashing bins. $\gmhash$ guarantees \eqref{eq:e approximation} for any $x \in [M]^K$ with at least $1 - \delta$ probability in $O(K^3 \delta^{-1} \Delta^{-1}(x))$ space, where $\Delta(x)$ is a query-specific constant and the number of hashing bins is set as $m = \Omega(K^2 \delta^{-1})$. We discuss $\Delta(x)$ at the end of this section. Since $\delta$ is typically small, the dependence on $1 / \delta$ is undesirable.

Our second algorithm, $\gmsketch$ (\cref{sec:sketch}), approximates $\Pmlg(x)$ as the median of $d$ probabilities, each of which is estimated by $\gmhash$. $\gmsketch$ guarantees \eqref{eq:e approximation} for any $x \in [M]^K$ with at least $1 - \delta$ probability in $O(K^3 \log(1 / \delta) \Delta^{-1}(x))$ space, when we set $m = \Omega(K^2 \Delta^{-1}(x))$ and $d = \Omega(\log(1 / \delta))$. The main advantage over $\gmhash$ is that the space is $O(\log(1 / \delta))$ instead of $O(1 / \delta)$.

Our last algorithm, $\gmfactorsketch$ (\cref{sec:factor sketch}), approximates $\Pmlg(x)$ as the product of $K - 1$ conditionals and a prior, one for each variable. Each conditional is estimated as a ratio of two count-min sketches. $\gmfactorsketch$ guarantees \eqref{eq:e approximation} for any $x \in [M]^K$ with at least $1 - \delta$ probability in $O(K^2 \log(K / \delta) \Delta^{-1}(x))$ space, when we set $m = \Omega(K \Delta^{-1}(x))$ and $d = \Omega(\log(K / \delta))$. The key improvement over $\gmsketch$ is that the space is $O(K^2)$ instead of being $O(K^3)$. In summary, $\gmfactorsketch$ is the best of our proposed solutions. We demonstrate this empirically in \cref{sec:experiments}.

The query-specific constant $\Delta(x) = \min_{k \in [K] - \set{1}} \Pmlg_k(x_k, x_{\parents(k)})$ is the minimum probability that the values of any variable-parent pair in $x$ co-occur in $(x^{(t)})_{t = 1}^n$. This probability can be small and our algorithms are unsuitable for estimating $\Pmlg(x)$ in such cases. Note that this does not imply that $\Pmlg(x)$ cannot be small. Unfortunately, the dependence on $\Delta(x)$ is generally unavoidable and we show this in \cref{sec:lower bound}.

The assumption that $\mathcal{G}$ is a tree is only for simplicity of exposition. Our algorithms and their analysis generalize to the setting where $X_{\parents(k)}$ is a vector of parent variables and $x_{\parents(k)}$ are their values. The only change is in how the pair $(x_k, x_{\parents(k)})$ is hashed.

%!TEX root = Paper.tex

\section{Algorithms and Analysis}
\label{sec:algorithms and analysis}

All of our algorithms hash the values of each variable in graphical model $\mathcal{G}$, and each variable-parent pair, to $m$ bins up to $d$ times. We denote the $i$-th hash function of variable $X_k$ by $h^i_k$ and the associated hash table by $c_k(i, \cdot)$. This hash table approximates $n \Pmlg_k(\cdot)$. The $i$-th hash function of the variable-parent pair $(X_k, X_{\parents(k)})$ is also $h^i_k$, and the associated hash table is $\bar{c}_k(i, \cdot)$. This hash table approximates $n \Pmlg_k(\cdot, \cdot)$. Our algorithms differ in how the hash tables are aggregated.

We define the notion of a \emph{hash}, which is a tuple $h = (h_1, \dots, h_K)$ of $K$ \mbox{randomly} drawn hash functions $h_k: \naturalset \to [m]$, one for each variable in $\mathcal{G}$. We make the assumption that hashes are pairwise-independent. We say that hashes $h^i$ and $h^j$ are \emph{pairwise-independent} when $h^i_k$ and $h^j_k$ are pairwise-independent for all $k \in [K]$. These kinds of hash functions can be computed fast and stored in a very small space \cite{cormode05improved}.

\subsection{Algorithm $\gmhash$}
\label{sec:hashing}

\begin{algorithm}[t]
  \caption{$\gmhash$: Hashed conditionals and priors.}
  \label{alg:hashing}
  \begin{algorithmic}
    \STATE {\bf Input:} Point query $x = (x_1, \dots, x_K)$
    \STATE
    \STATE $\displaystyle \hat{P}_1(x_1) \gets \frac{c_1(h_1(x_1))}{n}$
    \FORALL{$k = 2, \dots, K$}
      \STATE $\displaystyle \hat{P}_k(x_k \mid x_{\parents(k)}) \gets
      \frac{\bar{c}_k(h_k(x_k + M (x_{\parents(k)} - 1)))}{c_{\parents(k)}(h_{\parents(k)}(x_{\parents(k)}))}$
    \ENDFOR
    \STATE $\displaystyle \hat{P}(x) \gets \hat{P}_1(x_1) \prod_{k = 2}^K \hat{P}_k(x_k \mid x_{\parents(k)})$
    \STATE \vspace{0.1in}
    \STATE {\bf Output:} Point answer $\hat{P}(x)$
  \end{algorithmic}
\end{algorithm}

The pseudocode of our first algorithm, $\gmhash$, is in \cref{alg:hashing}. It approximates $\Pmlg(x)$ as the product of $K - 1$ conditionals and a prior, one for each variable $X_k$. Each conditional is estimated as a ratio of two hashing bins:
\begin{align*}
  \hat{P}_k(x_k \mid x_{\parents(k)}) =
  \frac{\bar{c}_k(h_k(x_k + M (x_{\parents(k)} - 1)))}{c_{\parents(k)}(h_{\parents(k)}(x_{\parents(k)}))}\,,
\end{align*}
where $\bar{c}_k(h_k(x_k + M (x_{\parents(k)} - 1)))$ is the number of times that hash function $h_k$ maps $(x^{(t)}_k, x^{(t)}_{\parents(k)})$ to the same bin as $(x_k, x_{\parents(k)})$ in $n$ steps, and $c_k(h_k(x_k))$ is the number of times that $h_k$ maps $x^{(t)}_k$ to the same bin as $x_k$ in $n$ steps. Note that $(x_k, x_{\parents(k)})$ can be represented equivalently as $x_k + M (x_{\parents(k)} - 1)$. The prior $\Pmlg_1(x_1)$ is estimated as:
\begin{align*}
  \hat{P}_1(x_1) = \frac{1}{n} c_1(h_1(x_1))\,.
\end{align*}
At time $t$, the hash tables are updated as follows. Let $x^{(t)}$ be the observation. Then for all $k \in [K], y \in [m]$:
\begin{align*}
  c_k(y) & \gets c_k(y) + \I{y = h_k(x^{(t)}_k)}\,, \\
  \bar{c}_k(y) & \gets \bar{c}_k(y) + \I{y = h_k(x^{(t)}_k + M (x^{(t)}_{\parents(k)} - 1))}\,.
\end{align*}
This update takes $O(K)$ time.

$\gmhash$ maintains $2 K - 1$ hash tables with $m$ bins each, one for each variable and one for each variable-parent pair in $\mathcal{G}$. Therefore, it consumes $O(K m)$ space. Now we show that $\hat{P}$ is a good approximation of $\Pmlg$.

\begin{theorem}
\label{thm:hashing} Let $\hat{P}$ be the estimator from \cref{alg:hashing}. Let $h$ be a random hash and $m$ be the number of bins in each hash function. Then for any $x$:
\begin{align*}
  \Pmlg(x) \prod_{k = 1}^K (1 - \varepsilon_k) \leq
  \hat{P}(x) \leq
  \Pmlg(x) \prod_{k = 1}^K (1 + \varepsilon_k)
\end{align*}
holds with at least $1 - \delta$ probability, where:
\begin{align*}
  \varepsilon_1 = 2 K [\Pmlg_1(x_1) \delta m]^{-1}\,, \quad
  \forall k \in [K] - \set{1}: \varepsilon_k = 2 K [\Pmlg_k(x_k, x_{\parents(k)}) \delta m]^{-1}\,.
\end{align*}
\end{theorem}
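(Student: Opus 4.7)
The plan is to analyze each hash bin individually with an additive error bound and then convert these into multiplicative bounds, one for each factor in the product that defines $\hat{P}(x)$. The key structural fact is that every hash bin is an overestimate of its target quantity (collisions only add counts), and for a pairwise-independent hash into $m$ bins, the expected overestimate equals $1/m$ times the total probability mass of the other values, which is at most $1/m$. Concretely, I would write $A_k := c_k(h_k(x_k))/n$ as $\Pmlg_k(x_k) + \sum_{i \neq x_k} \I{h_k(i) = h_k(x_k)} \Pmlg_k(i)$, and analogously for $\bar{c}_k$-based quantities $B_k$; taking expectation over the random hash gives $\EE{A_k - \Pmlg_k(x_k)} \le 1/m$ and $\EE{B_k - \Pmlg_k(x_k, x_{\parents(k)})} \le 1/m$.

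With those nonnegative overestimate random variables in hand, I would apply Markov's inequality to each of the $2K - 1$ bins (one $A_k$ for each variable and one $B_k$ for each non-root variable) at level $\delta / (2K - 1)$, and take a union bound. With probability at least $1 - \delta$, all $2K - 1$ overestimates are simultaneously bounded by $(2K-1)/(m\delta) \le 2K/(m\delta)$. The remainder of the proof is deterministic conditional on this event.

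I would then convert additive into multiplicative bounds factor by factor. For the root, $\Pmlg_1(x_1) \le A_1 \le \Pmlg_1(x_1) + 2K/(m\delta) = \Pmlg_1(x_1)(1 + \varepsilon_1)$, and the lower bound $A_1 \ge \Pmlg_1(x_1) \ge \Pmlg_1(x_1)(1 - \varepsilon_1)$ is free. For each non-root factor $\hat{P}_k(x_k \mid x_{\parents(k)}) = B_k / A_{\parents(k)}$, the upper bound uses $B_k \le \Pmlg_k(x_k, x_{\parents(k)}) + 2K/(m\delta)$ together with $A_{\parents(k)} \ge \Pmlg_{\parents(k)}(x_{\parents(k)})$, which gives $\hat{P}_k \le \Pmlg_k(x_k \mid x_{\parents(k)})(1 + \varepsilon_k)$ directly. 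The lower bound is the subtle step: $B_k \ge \Pmlg_k(x_k, x_{\parents(k)})$ and $A_{\parents(k)} \le \Pmlg_{\parents(k)}(x_{\parents(k)}) + 2K/(m\delta)$ yields $\hat{P}_k \ge \Pmlg_k(x_k \mid x_{\parents(k)}) / (1 + 2K/(m\delta\,\Pmlg_{\parents(k)}(x_{\parents(k)})))$, and I would use the marginalization inequality $\Pmlg_{\parents(k)}(x_{\parents(k)}) \ge \Pmlg_k(x_k, x_{\parents(k)})$ to bound the denominator slack by $\varepsilon_k$, and then $1/(1+y) \ge 1-y$ to finish.

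Finally, I would multiply the $K$ factorwise inequalities to obtain the claimed product bounds on $\hat{P}(x)/\Pmlg(x)$. The main obstacle is the bookkeeping in the lower bound of the conditionals: the slack coming from the denominator $A_{\parents(k)}$ is naturally expressed in terms of $\Pmlg_{\parents(k)}(x_{\parents(k)})$, but the theorem demands it be expressed in terms of the joint $\Pmlg_k(x_k, x_{\parents(k)})$; the marginalization inequality is what makes this reattribution legitimate and is also what ultimately forces the hardness parameter $\Delta(x) = \min_k \Pmlg_k(x_k, x_{\parents(k)})$ into the final space bound.
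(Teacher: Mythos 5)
Your proposal is correct and follows essentially the same route as the paper's proof: both rest on the nonnegativity of the bin overestimates, Markov's inequality applied to the expected collision mass of at most $1/m$, a union bound over the $2K-1$ bins with per-event budget of order $\delta/K$ (yielding the additive slack $2K/(m\delta)$), and a deterministic conversion of numerator/denominator overestimates into per-factor multiplicative error. The only difference is presentational: the paper packages your ratio-conversion step as a separate lemma stated in contrapositive form (large conditional error forces a large bin overestimate) and passes through additive bounds on the conditionals before the final multiplicative rewrite, whereas you condition on the good event first and derive the $(1 \pm \varepsilon_k)$ factor bounds directly via $1/(1+y) \geq 1-y$ and the marginalization inequality.
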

\begin{proof}
The proof is in Appendix. The key idea is to show that the number of bins $m$ can be chosen such that:
\begin{align}
  |\hat{P}_k(x_k \mid x_{\parents(k)}) - \Pmlg_k(x_k \mid x_{\parents(k)})| > \varepsilon_k
  \label{eq:overestimate}
\end{align}
is not likely for any $k \in [K] - \set{1}$ and $\varepsilon_1, \dots, \varepsilon_K > 0$. In other words, we argue that our estimate of each conditional $\Pmlg_k(x_k \mid x_{\parents(k)})$ can be arbitrary precise. By Lemma 1 in Appendix, the necessary conditions for event \eqref{eq:overestimate} are:
\begin{align*}
  \frac{1}{n} c_{\parents(k)}(h_{\parents(k)}(x_{\parents(k)})) -
  \Pmlg_{\parents(k)}(x_{\parents(k)}) & > \varepsilon_k \alpha_k\,, \\
  \!\!\frac{1}{n} \bar{c}_k(h_k(x_k + M (x_{\parents(k)} - 1))) -
  \Pmlg_k(x_k, x_{\parents(k)}) & > \varepsilon_k \alpha_k\,,
\end{align*}
where $\alpha_k = \Pmlg_{\parents(k)}(x_{\parents(k)})$ is the frequency that $X_{\parents(k)} = x_{\parents(k)}$ in $(x^{(t)})_{t = 1}^n$. In short, event \eqref{eq:overestimate} can happen only if $\gmhash$ significantly overestimates either $\Pmlg_{\parents(k)}(x_{\parents(k)})$ or $\Pmlg_k(x_k, x_{\parents(k)})$. We bound the probability of these events using Markov's inequality (Lemma 2 in Appendix) and then get that none of the events in \eqref{eq:overestimate} happen with at least $1 - \delta$ probability when the number of hashing bins $m \geq \sum_{k = 1}^K (2 / (\varepsilon_k \alpha_k \delta))$. Finally, we choose appropriate $\varepsilon_1, \dots, \varepsilon_K$.
\end{proof}

\noindent \cref{thm:hashing} shows that $\hat{P}(x)$ is a multiplicative approximation to $\Pmlg(x)$. The approximation improves with the number of bins $m$ because all error terms $\varepsilon_k$ are $O(1 / m)$. The accuracy of the approximation depends on the frequency of interaction between the values in $x$. In particular, if $\Pmlg_k(x_k, x_{\parents(k)})$ is sufficiently large for all $k \in [K] - \set{1}$, the approximation is good even for small $m$. More precisely, under the assumptions that:
\begin{align*}
  m \geq 2 K^2 [\Pmlg_1(x_1) \delta]^{-1}\,, \quad
  \forall k \in [K] - \set{1}: m \geq 2 K^2 [\Pmlg_k(x_k, x_{\parents(k)}) \delta]^{-1}\,,
\end{align*}
all $\varepsilon_k \leq 1 / K$ and the bound in \cref{thm:hashing} reduces to \eqref{eq:e approximation} for $K \geq 2$.

\subsection{Algorithm $\gmsketch$}
\label{sec:sketch}

\begin{algorithm}[t]
  \caption{$\gmsketch$: Median of $d$ $\gmhash$ estimates.}
  \label{alg:sketch}
  \begin{algorithmic}
    \STATE {\bf Input:} Point query $x = (x_1, \dots, x_K)$
    \STATE
    \FORALL{$i = 1, \dots, d$}
      \STATE $\displaystyle \hat{P}^i_1(x_1) \gets \frac{c_1(i, h^i_1(x_1))}{n}$
      \FORALL{$k = 2, \dots, K$}
        \STATE $\displaystyle \hat{P}^i_k(x_k \mid x_{\parents(k)}) \gets
        \frac{\bar{c}_k(i, h^i_k(x_k + M (x_{\parents(k)} - 1)))}{c_{\parents(k)}(i, h^i_{\parents(k)}(x_{\parents(k)}))}$
      \ENDFOR
      \STATE $\displaystyle \hat{P}^i(x) \gets \hat{P}^i_1(x_1) \prod_{k = 2}^K \hat{P}^i_k(x_k \mid x_{\parents(k)})$
    \ENDFOR
    \STATE $\hat{P}(x) \gets \median_{i \in [d]} \hat{P}^i(x)$
    \STATE
    \STATE {\bf Output:} Point answer $\hat{P}(x)$
  \end{algorithmic}
\end{algorithm}

The pseudocode of our second algorithm, $\gmsketch$, is in \cref{alg:sketch}. The algorithm approximates $\Pmlg(x)$ as the median of $d$ probability estimates:
\begin{align*}
  \hat{P}(x) = \median\nolimits_{i \in [d]} \hat{P}^i(x)\,.
\end{align*}
Each $\hat{P}^i(x)$ is computed by one instance of $\gmhash$, which is associated with the hash $h^i = (h^i_1, \dots, h^i_K)$. At time $t$, the hash tables are updated as follows. Let $x^{(t)}$ be the observation. Then for all $k \in [K], i \in [d], y \in [m]$:
\begin{align}
  c_k(i, y) & \gets c_k(i, y) + \I{y = h^i_k(x^{(t)}_k)}\,, \label{eq:counter update} \\
  \bar{c}_k(i, y) & \gets \bar{c}_k(i, y) + \I{y = h^i_k(x^{(t)}_k + M (x^{(t)}_{\parents(k)} - 1))}\,. \nonumber
\end{align}
This update takes $O(K d)$ time. $\gmsketch$ maintains $d$ instances of $\gmhash$. Therefore, it consumes $O(K m d)$ space. Now we show that $\hat{P}$ is a good approximation of $\Pmlg$.

\begin{theorem}
\label{thm:sketch} Let $\hat{P}$ be the estimator from \cref{alg:sketch}. Let $h^1, \dots, h^d$ be $d$ random and pairwise-independent hashes, and $m$ be the number of bins in each hash function. Then for any $d \geq 8 \log(1 / \delta)$ and $x$:
\begin{align*}
  \Pmlg(x) \prod_{k = 1}^K (1 - \varepsilon_k) \leq
  \hat{P}(x) \leq
  \Pmlg(x) \prod_{k = 1}^K (1 + \varepsilon_k)
\end{align*}
holds with at least $1 - \delta$ probability, where $\varepsilon_k$ are defined in \cref{thm:hashing} for $\delta = 1 / 4$.
\end{theorem}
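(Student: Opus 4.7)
My plan is to combine \cref{thm:hashing} with the standard ``median-of-estimates'' boosting trick. Each $\hat{P}^i(x)$ is produced by an independent instance of $\gmhash$ driven by its own hash $h^i$, so invoking \cref{thm:hashing} with confidence $\delta = 1/4$ gives that each individual estimate satisfies
\[
L \leq \hat{P}^i(x) \leq U, \qquad L := \Pmlg(x) \prod_{k=1}^K (1 - \varepsilon_k),\ U := \Pmlg(x) \prod_{k=1}^K (1 + \varepsilon_k),
\]
with probability at least $3/4$, where the $\varepsilon_k$ are exactly those of \cref{thm:hashing} with $\delta=1/4$. This already pins down $m$ implicitly through the $\varepsilon_k$; no tuning of $m$ is needed in this step beyond what the invocation already dictates.

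Next I would introduce indicators $Y_i = \I{L \leq \hat{P}^i(x) \leq U}$. Because $\hat{P}^i(x)$ is a deterministic function of the single hash $h^i$, and the hashes $h^1,\dots,h^d$ are independent across $i$, the $Y_i$ are independent Bernoulli variables with $\EE{Y_i} \geq 3/4$. A Hoeffding bound then gives
\[
\Pr\!\left[\,\sum_{i=1}^d Y_i \leq d/2 \,\right]
\leq \exp\!\left(-2 d \,(1/4)^2\right) = \exp(-d/8),
\]
which is at most $\delta$ as soon as $d \geq 8 \log(1/\delta)$, matching the hypothesis of the theorem.

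The final step is the sandwiching lemma for the median. If strictly more than $d/2$ of the $\hat{P}^i(x)$ fall in $[L,U]$, then the median necessarily lies in $[L,U]$: if the median exceeded $U$, at least $\lceil d/2 \rceil$ estimates would be above $U$, contradicting the lower bound on $\sum Y_i$; the symmetric argument handles $L$. Combining this with the previous step, the event $L \leq \hat{P}(x) \leq U$ holds with probability at least $1-\delta$, which is exactly the multiplicative bound claimed with the $\varepsilon_k$ values inherited from \cref{thm:hashing} at $\delta = 1/4$.

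The step I expect to require the most care is justifying the use of Hoeffding (rather than Chebyshev) from the paper's ``pairwise-independent hashes'' assumption: one has to read their definition as saying that each $h^i_k$ is internally a pairwise-independent hash family, while the $d$ independent draws $h^1,\dots,h^d$ are mutually independent across rows. This is the same structural split used in the CM sketch analysis and is what makes the $O(\log(1/\delta))$ dependence possible; without true independence across rows one would be forced back to a Chebyshev argument and $d = O(1/\delta)$. Everything else (the sandwich argument, choice of $\delta = 1/4$ to keep the success margin at $3/4 - 1/2 = 1/4$ so Hoeffding kicks in cleanly) is routine.
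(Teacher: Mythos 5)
Your proof is correct and follows essentially the same route as the paper's: invoke \cref{thm:hashing} at confidence $1/4$, define per-row success indicators, apply Hoeffding's inequality to show that more than half of the $d$ estimates succeed once $d \geq 8 \log(1/\delta)$, and conclude via the median sandwich argument. Your closing remark---that Hoeffding requires genuine independence of the hashes across the $d$ rows, with pairwise independence living inside each $h^i_k$---flags a subtlety the paper's proof passes over silently, but it does not change the argument.
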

\begin{proof}
The proof is in Appendix. The key idea is the so-called median trick on $d$ estimates of $\gmhash$ in \cref{thm:hashing} for $\delta = 1 / 4$.
\end{proof}

\noindent Similarly to \cref{sec:hashing}, \cref{thm:sketch} shows that $\hat{P}(x)$ is a multiplicative approximation to $\Pmlg(x)$. The approximation improves with the number of bins $m$ and depends on the frequency of interaction between the values in $x$.

\subsection{Algorithm $\gmfactorsketch$}
\label{sec:factor sketch}

\begin{algorithm}[t]
  \caption{$\gmfactorsketch$: Count-min sketches of conditionals and priors.}
  \label{alg:factor sketch}
  \begin{algorithmic}
    \STATE {\bf Input:} Point query $x = (x_1, \dots, x_K)$
    \STATE
    \STATE // Count-min sketches for variables in $\mathcal{G}$
    \FORALL{$k = 1, \dots, K$}
      \FORALL{$i = 1, \dots, d$}
        \STATE $\displaystyle \hat{P}^i_k(x_k) \gets \frac{c_k(i, h^i_k(x_k))}{n}$
      \ENDFOR
      \STATE $\hat{P}_k(x_k) \gets \min_{i \in [d]} \hat{P}^i_k(x_k)$
    \ENDFOR
    \STATE
    \STATE // Count-min sketches for variable-parent pairs in $\mathcal{G}$
    \FORALL{$k = 2, \dots, K$}
      \FORALL{$i = 1, \dots, d$}
        \STATE $\displaystyle \hat{P}^i_k(x_k, x_{\parents(k)}) \gets
        \frac{\bar{c}_k(i, h^i_k(x_k + M (x_{\parents(k)} - 1)))}{n}$
      \ENDFOR
      \STATE $\hat{P}_k(x_k, x_{\parents(k)}) \gets \min_{i \in [d]} \hat{P}^i_k(x_k, x_{\parents(k)})$
    \ENDFOR
    \STATE
    \FORALL{$k = 2, \dots, K$}
      \STATE $\displaystyle \hat{P}_k(x_k \mid x_{\parents(k)}) \gets
      \frac{\hat{P}_k(x_k, x_{\parents(k)})}{\hat{P}_{\parents(k)}(x_{\parents(k)})}$
    \ENDFOR
    \STATE $\displaystyle \hat{P}(x) \gets \hat{P}_1(x_1) \prod_{k = 2}^K \hat{P}_k(x_k \mid x_{\parents(k)})$
    \STATE \vspace{0.1in}
    \STATE {\bf Output:} Point answer $\hat{P}(x)$
  \end{algorithmic}
\end{algorithm}

Our final algorithm, $\gmfactorsketch$, is in \cref{alg:factor sketch}. The algorithm approximates $\Pmlg(x)$ as the product of $K - 1$ conditionals and a prior, one for each variable $X_k$. Each conditional is estimated as a ratio of two CM sketches:
\begin{align*}
  \hat{P}_k(x_k \mid x_{\parents(k)}) =
  \frac{\hat{P}_k(x_k, x_{\parents(k)})}{\hat{P}_{\parents(k)}(x_{\parents(k)})}\,,
\end{align*}
where $\hat{P}_k(x_k, x_{\parents(k)})$ is the CM sketch of $\Pmlg_k(x_k, x_{\parents(k)})$ and $\hat{P}_k(x_k)$ is the CM sketch of $\Pmlg_k(x_k)$. The prior $\Pmlg_1(x_1)$ is approximated by its CM sketch $\hat{P}_1(x_1)$.

At time $t$, the hash tables are updated in the same way as in \eqref{eq:counter update}. This update takes $O(K d)$ time and $\gmfactorsketch$ consumes $O(K m d)$ space. Now we show that $\hat{P}$ is a good approximation of $\Pmlg$.

\begin{theorem}
\label{thm:factor sketch} Let $\hat{P}$ be the estimator from \cref{alg:factor sketch}. Let $h^1, \dots, h^d$ be $d$ random and pairwise-independent hashes, and $m$ be the number of bins in each hash function. Then for any $d \geq \log(2 K / \delta)$ and $x$:
\begin{align*}
  \Pmlg(x) \prod_{k = 1}^K (1 - \varepsilon_k) \leq
  \hat{P}(x) \leq
  \Pmlg(x) \prod_{k = 1}^K (1 + \varepsilon_k)
\end{align*}
holds with at least $1 - \delta$ probability, where:
\begin{align*}
  \varepsilon_1 = e [\Pmlg_1(x_1) m]^{-1}\,, \quad
  \forall k \in [K] - \set{1}: \varepsilon_k =  e [\Pmlg_k(x_k, x_{\parents(k)}) m]^{-1}\,.
\end{align*}
\end{theorem}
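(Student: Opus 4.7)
The plan is to reduce everything to the count-min bound (\cref{thm:count-min sketch}) applied to the $K$ marginal sketches $\hat{P}_k(x_k)$ and the $K-1$ joint sketches $\hat{P}_k(x_k, x_{\parents(k)})$ maintained by $\gmfactorsketch$, and then assemble the per-factor bounds multiplicatively. Concretely, I would invoke \cref{thm:count-min sketch} on each of the $2K - 1$ sketches with per-sketch failure probability $\delta/(2K)$ and additive error $e/m$; since \cref{thm:count-min sketch} requires $d \geq \log((2K)/\delta)$ hash repetitions for this failure level, the value of $d$ in the hypothesis is exactly what is needed. A union bound then guarantees that, simultaneously for every $k$,
\begin{align*}
\Pmlg_k(x_k) &\leq \hat{P}_k(x_k) \leq \Pmlg_k(x_k) + e/m, \\
\Pmlg_k(x_k, x_{\parents(k)}) &\leq \hat{P}_k(x_k, x_{\parents(k)}) \leq \Pmlg_k(x_k, x_{\parents(k)}) + e/m,
\end{align*}
with probability at least $1 - \delta$.

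On this good event I would analyze $\hat{P}(x)/\Pmlg(x)$ factor by factor. The prior factor satisfies $\hat{P}_1(x_1)/\Pmlg_1(x_1) \in [1,\, 1 + e/(\Pmlg_1(x_1)\, m)] = [1,\, 1 + \varepsilon_1]$. For each $k \geq 2$ I would write the conditional ratio as
\begin{align*}
\frac{\hat{P}_k(x_k \mid x_{\parents(k)})}{\Pmlg_k(x_k \mid x_{\parents(k)})}
= \frac{\hat{P}_k(x_k, x_{\parents(k)}) \,/\, \Pmlg_k(x_k, x_{\parents(k)})}{\hat{P}_{\parents(k)}(x_{\parents(k)}) \,/\, \Pmlg_{\parents(k)}(x_{\parents(k)})}.
\end{align*}
The numerator ratio lies in $[1,\, 1 + \varepsilon_k]$ directly from the CM bound. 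The denominator ratio lies in $[1,\, 1 + e/(\Pmlg_{\parents(k)}(x_{\parents(k)})\, m)]$, which is contained in $[1,\, 1 + \varepsilon_k]$ because the marginal $\Pmlg_{\parents(k)}(x_{\parents(k)})$ dominates the joint $\Pmlg_k(x_k, x_{\parents(k)})$. Hence each conditional factor lies in $[1/(1+\varepsilon_k),\, 1+\varepsilon_k]$, and the elementary inequality $1/(1+\varepsilon_k) \geq 1 - \varepsilon_k$ widens this to $[1 - \varepsilon_k,\, 1 + \varepsilon_k]$. Multiplying over $k \in [K]$ yields the claim.

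The main subtlety, and the only place the proof is not pure bookkeeping, is the one-sidedness of the count-min sketch: it only overestimates. In particular, the lower bound on $\hat{P}(x)$ is not produced by any sketch undershooting; it is forced by the denominators in the conditionals possibly being too large, and is recovered only through the $1/(1+\varepsilon) \geq 1-\varepsilon$ step. The key observation that allows a single $\varepsilon_k$ to control both sides of a conditional factor is the marginal-versus-joint inequality $\Pmlg_{\parents(k)}(x_{\parents(k)}) \geq \Pmlg_k(x_k, x_{\parents(k)})$, which ensures that the denominator's relative error cannot exceed the numerator's. The remaining work is the union bound over the $2K - 1$ CM sketches, which is precisely what fixes the $\log(2K/\delta)$ factor in $d$.
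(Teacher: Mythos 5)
Your proof is correct and follows essentially the same route as the paper's: a union bound over the $2K-1$ count-min sketches, each invoked with additive error $e/m$ and per-sketch failure probability $\delta/(2K)$ (which is exactly where $d \geq \log(2K/\delta)$ comes from), followed by a per-factor multiplicative assembly with the same $\varepsilon_k$. The only cosmetic difference is that you handle each conditional ratio directly in multiplicative form, via $\Pmlg_{\parents(k)}(x_{\parents(k)}) \geq \Pmlg_k(x_k, x_{\parents(k)})$ and $1/(1+\varepsilon_k) \geq 1-\varepsilon_k$, whereas the paper routes the same two facts (CM sketches only overestimate, and the denominator is at least $\alpha_k n$) through the additive ratio-decomposition of \cref{lem:ratio decomposition} and converts to multiplicative form only at the end.
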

\begin{proof}
The proof is in Appendix. The main idea of the proof is similar to that of \cref{thm:hashing}. The key difference is that we prove that event \eqref{eq:overestimate} is unlikely for any $k \in [K] -\allowbreak \set{1}$ by bounding the probabilities of events:
\begin{align*}
  \hat{P}_{\parents(k)}(x_{\parents(k)}) - \Pmlg_{\parents(k)}(x_{\parents(k)}) & > \varepsilon_k \alpha_k\,, \\
  \hat{P}_k(x_k, x_{\parents(k)}) - \Pmlg_k(x_k, x_{\parents(k)}) & > \varepsilon_k \alpha_k\,,
\end{align*}
where $\hat{P}_k(x_k, x_{\parents(k)})$ is the CM sketch of $\Pmlg_k(x_k, x_{\parents(k)})$ and $\hat{P}_{\parents(k)}(x_{\parents(k)})$ is the CM sketch of $\Pmlg_{\parents(k)}(x_{\parents(k)})$.
\end{proof}

\noindent As in Sections \ref{sec:hashing} and \ref{sec:sketch}, \cref{thm:factor sketch} shows that $\hat{P}(x)$ is a multiplicative approximation to $\Pmlg(x)$. The approximation improves with the number of bins $m$ and depends on the frequency of interaction between the values in $x$.

%!TEX root = Paper.tex

\subsection{Lower Bound}
\label{sec:lower bound}

Our bounds depend on query-specific constants $\Pmlg_k(x_k, x_{\parents(k)})$, which can be small. We argue that this dependence is intrinsic. In particular, we show that there exists a family of distributions $\mathcal{C}$ such that any data structure that can summarize any $\Pmlg \in \mathcal{C}$ well must consume $\Omega(\Delta^{-1}(\mathcal{C}))$ space, where:
\begin{align*}
  \textstyle
  \Delta(\mathcal{C}) = \min_{\Pmlg \in \mathcal{C}, x \in [M]^K, k \in [K] - \set{1}: \Pmlg(x) > 0}
  \Pmlg_k(x_k,  x_{\parents(k)})\,.
\end{align*}
Our family of distributions $\mathcal{C}$ is defined on two dependent random variables, where $X_1$ is the parent and $X_2$ is its child. Let $m$ be an integer such that $m = 1 / \epsilon$ for some fixed $\epsilon \in [0, 1]$. Each model in $\mathcal{C}$ is defined as follows. The probability of any $m$ values of $X_1$ is $\epsilon$. The conditional of $X_2$ is defined as follows. When $\Pmlg_1(i) > 0$, the probability of any $m$ values of $X_2$ is $\epsilon$. When $\Pmlg_1(i) = 0$, the probability of all values of $X_2$ is $1 / M$. Note that each model induces a different distribution and that the number of the distributions is ${M \choose  m}^{m + 1}$, because there are ${M \choose  m}$ different priors $\Pmlg_1$ and ${M \choose  m}$ different conditionals $\Pmlg_2(\cdot \mid i)$, one for each $\Pmlg_1(i) > 0$. We also note that $\Delta(\mathcal{C}) = \epsilon^2$. The main result of this section is proved below.

\begin{theorem}
\label{thm:lower bound} Any data structure that can summarize any $\Pmlg \in \mathcal{C}$ as $\hat{P}$ such that $|\hat{P}(x) - \Pmlg(x)| < \epsilon^2 / 2$ for any $x \in [M]^K$ must consume $\Omega(\Delta^{-1}(\mathcal{C}))$ space.
\end{theorem}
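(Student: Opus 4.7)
The plan is to prove this lower bound by a standard incompressibility / counting argument: show that if a sketch meets the additive $\epsilon^2/2$ guarantee, then distinct members of $\mathcal{C}$ must map to distinct sketches, and then count $|\mathcal{C}|$. Since $\Delta(\mathcal{C}) = \epsilon^2 = 1/m^2$, exhibiting at least $2^{\Omega(m^2)}$ distinct reachable sketches gives the claimed $\Omega(m^2) = \Omega(\Delta^{-1}(\mathcal{C}))$ space bound in bits.

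The key intermediate step is a \emph{separation lemma}: for any two distinct $\bar P, \bar P' \in \mathcal{C}$ there exists $x = (x_1, x_2) \in [M]^2$ with $|\bar P(x) - \bar P'(x)| \geq \epsilon^2$. I would split into two cases. First, if the priors differ, $\bar P_1 \neq \bar P_1'$, then since each prior is uniform with mass $\epsilon$ on some $m$-subset, there is an index $i$ that lies in the support of $\bar P_1$ but not of $\bar P_1'$; picking any $j$ in the support of $\bar P_2(\cdot \mid i)$ gives $\bar P(i,j) = \epsilon \cdot \epsilon = \epsilon^2$ while $\bar P'(i,j) = \bar P_1'(i)\,\bar P_2'(j \mid i) = 0$. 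Second, if the priors agree but the conditionals differ at some $i$ with $\bar P_1(i) > 0$, the supports of $\bar P_2(\cdot \mid i)$ and $\bar P_2'(\cdot \mid i)$ (each an $m$-subset) cannot coincide, so there is $j$ with $\bar P_2(j \mid i) = \epsilon$ and $\bar P_2'(j \mid i) = 0$, again yielding $\bar P(i,j) = \epsilon^2$ and $\bar P'(i,j) = 0$. Combining with the approximation hypothesis via the triangle inequality, if a single $\hat P$ were an $\epsilon^2/2$-approximation to both $\bar P$ and $\bar P'$, then $|\bar P(x) - \bar P'(x)| < \epsilon^2$ everywhere, contradicting the separation lemma. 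Hence the map $\bar P \mapsto \hat P$ from $\mathcal{C}$ into the set of reachable sketches is injective.

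To finish, I would count: $|\mathcal{C}| = \binom{M}{m}^{m+1}$, since there are $\binom{M}{m}$ choices for $\bar P_1$ and, for each of the $m$ indices $i$ with $\bar P_1(i) > 0$, independently $\binom{M}{m}$ choices for the conditional $\bar P_2(\cdot \mid i)$. Therefore any data structure meeting the hypothesis must admit at least $|\mathcal{C}|$ distinct states, so it uses at least
\[
  \log_2 |\mathcal{C}| \;=\; (m+1) \log_2 \binom{M}{m} \;\geq\; (m+1)\, m \log_2(M/m) \;=\; \Omega(m^2) \;=\; \Omega(\Delta^{-1}(\mathcal{C}))
\]
bits, assuming $M$ is at least a constant factor larger than $m$ (which is the regime of interest, since otherwise the sketching problem is trivial). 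The main obstacle is the separation lemma: one must rule out the possibility that two distinct members of $\mathcal{C}$ always agree to within $\epsilon^2$, which is delicate because the individual probabilities themselves are only on the order of $\epsilon^2$. The case analysis above shows that switching a support element in either the prior or a conditional forces a pointwise gap of exactly $\epsilon^2$, which is just enough to exploit the strict $<\epsilon^2/2$ approximation guarantee.
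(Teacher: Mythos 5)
Your proposal is correct and follows essentially the same incompressibility/counting argument as the paper: distinct members of $\mathcal{C}$ must map to distinct sketch states because their joint probabilities take values in $\set{0, \epsilon^2}$, and $\log_2 \binom{M}{m}^{m+1} = \Omega(m^2) = \Omega(\Delta^{-1}(\mathcal{C}))$. Your explicit two-case separation lemma merely fills in the detail the paper compresses into the single sentence ``the data structure must be able to distinguish between any two $\bar{P} \in \mathcal{C}$, since $\bar{P}(x) \in \set{0, \epsilon^2}$,'' so the two arguments are the same in substance.
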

\begin{proof}
Suppose that a data structure can summarize any $\Pmlg \in \mathcal{C}$ as $\hat{P}$ such that $|\hat{P}(x) - \Pmlg(x)| < \epsilon^2 / 2$ for any $x \in [M]^K$. Then the data structure must be able to distinguish between any two $\bar{P} \in \mathcal{C}$, since $\bar{P}(x) \in \set{0, \epsilon^2}$. At the minimum, such a data structure must be able to represent the index of any $\Pmlg \in \mathcal{C}$, which cannot be done in less than:
\begin{align*}
  \textstyle
  \log_2\left({M \choose  m}^{m + 1}\right) \geq
  \log_2\left(\left(M / m\right)^{m^2 + m}\right) \geq
  m^2 \log_2(M / m)
\end{align*}
bits because the number of distributions in $\mathcal{C}$ is ${M \choose  m}^{m + 1}$. Now note that $m^2 = 1 / \epsilon^2 = \Delta^{-1}(\mathcal{C})$.
\end{proof}

\noindent It is easy to verify that $\gmfactorsketch$ is such a data structure for $m = 5 e \Delta^{-1}(\mathcal{C})$ in \cref{thm:factor sketch}. In this setting, $\gmfactorsketch$ consumes $O(\log(1 / \delta) \Delta^{-1}(\mathcal{C}))$ space. The only major difference from \cref{thm:lower bound} is that $\gmfactorsketch$ makes a mistake with at most $\delta$ probability. Up to this factor, our analysis is order-optimal and we conclude that the dependence on the reciprocal of $\min_{k \in [K] - \set{1}} \Pmlg_k(x_k, x_{\parents(k)})$ cannot be avoided in general.

%!TEX root = Paper.tex

\section{Comparison with the Count-Min Sketch}
\label{sec:comparison}

In general, the error bounds in Theorems \ref{thm:count-min sketch} and \ref{thm:factor sketch} are not comparable, because $\Pml$ in \eqref{eq:distribution} is a different estimator from $\Pmlg$ in \eqref{eq:factored distribution}. To compare the bounds, we make the assumption that $(x^{(t)})_{t = 1}^n$ is a stream of $n$ observations such that $\Pmlg = \Pml$. This holds, for instance, when $n \to \infty$, because both $\Pmlg$ and $\Pml$ are consistent estimators of $P$. In the rest of this section, and without loss of generality, we assume that $\Pmlg = \Pml = P$.

In this section, we construct a class of graphical models where $\gmfactorsketch$ has a tighter error bound than the CM sketch. This class contains naive Bayes models with $K + 1$ variables:
\begin{align}
  P(x) = P_1(x_1) \prod_{k = 2}^{K + 1} P_k(x_k \mid x_1)\,.
  \label{eq:NB}
\end{align}
Variable $X_1$ is binary. For any $k \in [K + 1] - \set{1}$, variable $X_k$ takes values from $[M]$. For simplicity of exposition, we assume that the prior is $P_1(1) = P_1(2) = 0.5$. We fix $x$ and define $C_k = P_k(x_k \mid x_1)$ for any $k \in [K + 1] - \set{1}$.

Suppose that $\gmfactorsketch$ represents $P_1$ exactly, and therefore $\hat{P}_1 = P_1$. Then by \cref{thm:factor sketch}, for any $x$ with at least $1 - \delta$ probability:
\begin{align}
  \hat{P}(x) \leq \frac{1}{2} \left[\prod_{k = 2}^{K + 1} C_k\right]
  \left[\prod_{k = 2}^{K + 1} \left(1 + \frac{2 e}{C_k m}\right)\right]\,,
  \label{eq:NB factor sketch}
\end{align}
where $m$ is the number of hashing bins in $\gmfactorsketch$. Since $\hat{P}_1 = P_1$, we can omit $1 + \varepsilon_1$ from \cref{thm:factor sketch}. This approximation consumes, up to logarithmic factors in $K$, $2 K m \log(1 / \delta)$ space. The CM sketch (\cref{sec:count-min sketch}) guarantees that:
\begin{align}
  \Pcm(x) \leq
  \frac{1}{2} \left[\prod_{k = 2}^{K + 1} C_k\right] + \frac{e}{m'} =
  \frac{1}{2} \left[\prod_{k = 2}^{K + 1} C_k\right]
  \left(1 + \frac{2 e}{m'} \left[\prod_{k = 2}^{K + 1} \frac{1}{C_k}\right]\right)
  \label{eq:NB count-min sketch}
\end{align}
for any $x$ with at least $1 - \delta$ probability, where $m'$ is the number of hashing bins in the CM sketch. This approximation consumes $m' \log(1 / \delta)$ space.

We want to show that the upper bound in \eqref{eq:NB factor sketch} is tighter than that in \eqref{eq:NB count-min sketch} for any reasonable $m$. Since $\gmfactorsketch$ maintains $2 K$ times more hash tables than the CM sketch, we increase the number of bins in the CM sketch to $m' = 2 K m$, and get the following upper bound:
\begin{align}
  \Pcm(x)
  & \leq \frac{1}{2} \left[\prod_{k = 2}^{K + 1} C_k\right]
  \left(1 + \frac{e}{K m} \left[\prod_{k = 2}^{K + 1} \frac{1}{C_k}\right]\right)\,.
  \label{eq:NB count-min sketch 2}
\end{align}
Now both $\gmfactorsketch$ and the CM sketch consume the same space, and their error bounds are functions of $m$.

Roughly speaking, the bound in \eqref{eq:NB factor sketch} seems to be tighter than that in \eqref{eq:NB count-min sketch 2} because it contains $K$ potentially large values $1 / C_k$, each of which can be offset by a potentially small $1 / m$. On the other hand, all values $1 / C_k$ in \eqref{eq:NB count-min sketch 2} are offset only by a single $1 / m$. Now we prove this claim formally. Before we start, note that both upper bounds in \eqref{eq:NB factor sketch} and \eqref{eq:NB count-min sketch 2} contain $\frac{1}{2} \left[\prod_{k = 2}^{K + 1} C_k\right]$. Therefore, we can divide both bounds by this constant and get that the upper bound in \eqref{eq:NB factor sketch} is tighter than that in \eqref{eq:NB count-min sketch 2} when:
\begin{align}
  1 + \frac{e}{K m} \left[\prod_{k = 2}^{K + 1} \frac{1}{C_k}\right] >
  \prod_{k = 2}^{K + 1} \left(1 + \frac{2 e}{C_k m}\right)\,.
  \label{eq:NB event 1}
\end{align}
Now we rewrite each $(1 + 2 e / (C_k m))$ on the right-hand side as $(1 / C_k) (C_k + 2 e / m)$ and multiply both sides by $\prod_{k = 2}^{K + 1} C_k$. Then we omit $\prod_{k = 2}^{K + 1} C_k$ from the left-hand side and get that event \eqref{eq:NB event 1} happens when:
\begin{align}
  \frac{e}{K m} > \prod_{k = 2}^{K + 1} \left(C_k + \frac{2 e}{m}\right)\,.
  \label{eq:NB event 2}
\end{align}
If $C_k$ is close to one for all $k \in [K + 1] - \set{1}$, the right-hand side of \eqref{eq:NB event 2} is at least one and we get that $m$ should be smaller than $e / K$. This result is impractical since $K$ is usually much larger than $e$ and we require that $m \geq 1$. To make progress, we restrict our analysis to a class of $x$. In particular, let $C_k \leq 1 / 2$ for all $k \in [K + 1] - \set{1}$. Then we can bound the right-hand side of \eqref{eq:NB event 2} from above as:
\begin{align*}
  \prod_{k = 2}^{K + 1} \left(C_k + \frac{2 e}{m}\right) \leq
  \left(\frac{1}{2}\right)^K \left(1 + \frac{4 e}{m}\right)^K \leq
  e \left(\frac{1}{2}\right)^K
\end{align*}
for $m \geq 4 e K$. This assumption on $m$ is not particularly strong, since \cref{thm:factor sketch} says that we get good multiplicative approximations to $\Pmlg(x)$ only if $m = \Omega(K)$. Now we apply the above upper bound to inequality \eqref{eq:NB event 2} and rearrange it as $2^K / K > m$. Since $2^K / K$ is exponential in $K$, we get that the bound in \eqref{eq:NB factor sketch} is tighter than that in \eqref{eq:NB count-min sketch 2} for a wide range of $m$ and any $x$ where $C_k \leq 1 / 2$ for all $k \in [K + 1] - \set{1}$. Our result is summarized below.

\begin{theorem}
\label{thm:comparison} Let $P$ be the distribution in \eqref{eq:NB} and $x$ be such that $P_k(x_k \mid x_1) \leq 1 / 2$ for all $k \in [K + 1] - \set{1}$. Let $m \geq 4 e K$ and $m' = 2 K m$. Then for any $m < 2^K / K$, the error bound of $\gmfactorsketch$ is tighter than that of the CM sketch at the same space. More precisely:
\begin{align*}
  P(x) \prod_{k = 2}^{K + 1} (1 + \varepsilon_k) \leq
  P(x) + \frac{e}{m'}\,,
\end{align*}
where $\varepsilon_k$ are defined in \cref{thm:factor sketch}.
\end{theorem}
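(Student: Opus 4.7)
The plan is to follow the road-map laid out in the paragraphs preceding the statement and tighten it into a short formal argument. I would begin from the two upper bounds already established: the $\gmfactorsketch$ bound in \eqref{eq:NB factor sketch} and the equal-space $\Pcm$ bound in \eqref{eq:NB count-min sketch 2}. Both contain the common positive factor $\tfrac{1}{2}\prod_{k=2}^{K+1} C_k$, so dividing through reduces the theorem to verifying inequality \eqref{eq:NB event 1}.

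Next I would perform a purely algebraic rearrangement: multiply both sides of \eqref{eq:NB event 1} by the positive quantity $\prod_{k=2}^{K+1} C_k$ (which preserves the direction of the inequality), use the identity $(1 + 2e/(C_k m))\,C_k = C_k + 2e/m$ to rewrite each factor on the right, and then drop the positive term $\prod_{k=2}^{K+1} C_k$ from the left-hand side to pass to a strictly stronger sufficient condition. This is precisely inequality \eqref{eq:NB event 2}, namely $e/(K m) > \prod_{k=2}^{K+1}\bigl(C_k + 2e/m\bigr)$.

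The main work is in upper-bounding the right-hand side of \eqref{eq:NB event 2} under the stated hypotheses. Using $C_k \leq 1/2$ uniformly and factoring $1/2$ out of each term, I obtain $\prod_{k=2}^{K+1}\bigl(C_k + 2e/m\bigr) \leq (1/2)^K (1 + 4e/m)^K$. The standard inequality $(1+y)^K \leq e^{yK}$ applied with $y = 4e/m$, together with the hypothesis $m \geq 4eK$ (so that $4eK/m \leq 1$), yields $(1 + 4e/m)^K \leq e$. Hence \eqref{eq:NB event 2} is implied by $e/(K m) > e \cdot (1/2)^K$, which rearranges to exactly the hypothesis $m < 2^K / K$. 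Chaining the implications back, $m < 2^K/K$ forces \eqref{eq:NB event 1}, which in turn forces the displayed inequality of the theorem.

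I do not anticipate a serious obstacle. The only delicate points are (i) tracking the direction of the inequalities through each algebraic step, which is why it is important that $\prod_{k=2}^{K+1} C_k > 0$ and that we drop a positive term from the left-hand side rather than the right, and (ii) applying the exponential bound $(1+y)^K \leq e^{yK}$ with the correct exponent so that the hypothesis $m \geq 4eK$ is what is actually needed. Once these two points are verified, the two assumptions $m \geq 4eK$ and $m < 2^K/K$ combine cleanly to deliver the required inequality.
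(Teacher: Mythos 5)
Your proposal is correct and follows essentially the same route as the paper: reduce to inequality \eqref{eq:NB event 1} by cancelling the common factor $\tfrac{1}{2}\prod_{k=2}^{K+1} C_k$, rearrange to the sufficient condition \eqref{eq:NB event 2}, and bound its right-hand side by $e\,(1/2)^K$ using $C_k \leq 1/2$ and $m \geq 4eK$, so that $m < 2^K/K$ closes the argument. The only difference is cosmetic: you make explicit the step $(1+4e/m)^K \leq e^{4eK/m} \leq e$, which the paper asserts without elaboration.
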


The above result is quite practical. Suppose that $K = 32$. Then our upper bound is tighter for any $m$ such that:
\begin{align*}
  4 e K < 348 \leq m \leq
  2^{27} = 2^{32} / 32 = 2^K / K\,.
\end{align*}
By the pidgeonhole principle, \cref{thm:comparison} guarantees improvements in at least $2 (M - 1)^K$ points $x$ in any distribution in \eqref{eq:NB}. We can bound the fraction of these points from below as:
\begin{align*}
  \frac{2 (M - 1)^K}{2 M^K} =
  \exp[K \log(M - 1) - K \log M] \geq
  \exp\left[- \frac{K}{M - 1}\right] \geq
  1 - \frac{K}{M - 1}\,.
\end{align*}
In our motivating examples, $M \approx 10^5$ and $K \approx 100$. In this setting, the error bound of $\gmfactorsketch$ is tighter than that of the CM sketch in at least $99.9\%$ of $x$, for any naive Bayes model in \eqref{eq:NB}.

%!TEX root = Paper.tex

\section{Experiments}
\label{sec:experiments}

In this section, we compare our algorithms (\cref{sec:algorithms and analysis}) and the CM sketch on the synthetic problem in \cref{sec:comparison}, and also on a real-world problem in online advertising.

\subsection{Synthetic Problem}
\label{sec:synthetic problem}

We experiment with the naive Bayes model in \eqref{eq:NB}, where $P_1(1) = P_1(2) = 0.5$; and:
\begin{align*}
  \forall i \in [N]: P_k(i \mid 1) & = 1 / N\,, \quad
  & \forall i \in [M] - [N]: P_k(i \mid 1) & = 0\,, \\
  \forall i \in [N]: P_k(i \mid 2) & = 0\,, \quad
  & \forall i \in [M] - [N]: P_k(i \mid 2) & = 1 / (M - N)
\end{align*}
for any $k \in [K + 1] - \set{1}$ and $N \ll M$. The model defines the following distribution over $x = (x_1, \dots, x_K)$: when $x_1 = 1$, $P(x) = 0.5 N^{- K}$ and we refer to the example $x$ as \emph{heavy}; and when $x_1 = 2$, $P(x) = 0.5 (M - N)^{- K}$ and we refer to the example $x$ as \emph{light}. The heavy examples are much more probable when $N \ll M$. We set $M = 2^{16}$.

All compared algorithms are trained on $1\text{M}$ i.i.d. examples from distribution $P$ and tested on $500\text{k}$ i.i.d. heavy examples from $P$. We report the fraction of imprecise estimates of $P$ as a function of space. The estimate of $P(x)$ is \emph{precise} when $(1 / e) P(x) \leq\allowbreak \hat{P}(x) \leq e P(x)$. When the sample size $n$ is large, both $\Pmlg \to P$ and $\Pml \to P$, and this is a fair way of comparing our methods to the CM sketch. We choose $d = 5$. We observe similar trends for other values of $d$. All results are averaged over $20$ runs.

\subsection{Easy Synthetic Problem}
\label{sec:easy synthetic problem}

We choose $K = 4$ and $N = 8$, and then $P(x) = 2^{-13}$ for all heavy $x$. In this problem, the CM sketch can approximate $P(x)$ within a multiplicative factor of $e$ for any heavy $x$ in about $2^{13}$ space. This space is small, and therefore this problem is \emph{easy for the CM sketch}.

Our results are reported in Figure \ref{fig:results}a. We observe that all of our algorithms outperform the CM sketch. In particular, note that $\Pcm$ approximates $P$ well for any heavy $x$ in about $2^{15}$ space. Our algorithms achieve the same quality of the approximation in at most $2^{13}$ space. $\gmfactorsketch$ consumes $2^{10}$ space, which is almost two orders of magnitude less than the CM sketch.

\subsection{Hard Synthetic Problem}
\label{sec:hard synthetic problem}

\begin{figure*}[t]
  \centering
  \includegraphics[width=2.32in, bb=2.8in 4.5in 5.7in 6.5in]{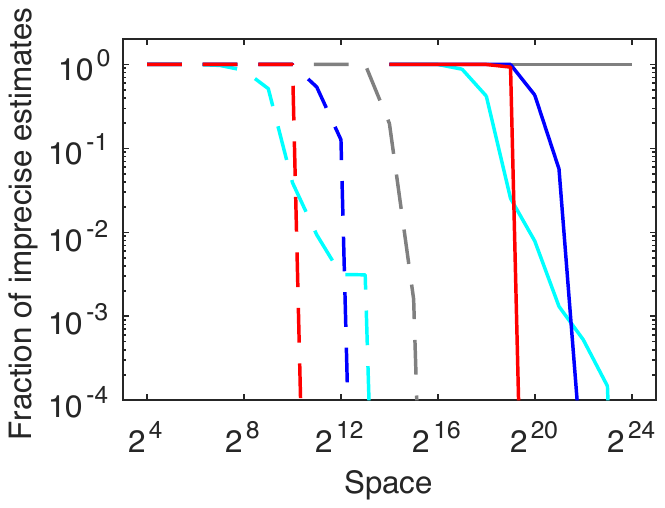}
  \includegraphics[width=2.32in, bb=2.8in 4.5in 5.7in 6.5in]{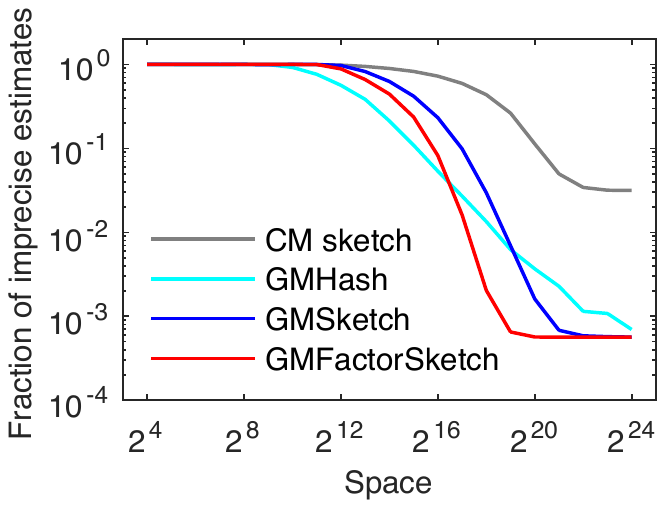} \\
  \vspace{0in} \hspace{0.2in} (a) \hspace{2.15in} (b) \hspace{0.05in}
  \caption{\textbf{a}. Evaluation of the CM sketch, $\gmhash$, $\gmsketch$, and $\gmfactorsketch$ on the easy problem
   in \cref{sec:easy synthetic problem} (dashed lines) and the hard problem in \cref{sec:hard synthetic problem} (solid lines). \textbf{b}. Evaluation on the real-world problem in \cref{sec:real-world problem}.}
  \label{fig:results}
\end{figure*}

We set $K = 32$ and $N = 64$, and then $P(x) = 2^{-193}$ for all heavy $x$. In this problem, the CM sketch can approximate $P(x)$ within a multiplicative factor of $e$ for any heavy $x$ in about $2^{193}$ space. This space is unrealistically large, and therefore this problem is \emph{hard for the CM sketch}.

Our results are reported in Figure \ref{fig:results}a and we observe three major trends. First, the CM sketch performs poorly. Second, as in \cref{sec:easy synthetic problem}, our algorithms outperform the CM sketch. Finally, when the fraction of imprecise estimates is small, our algorithms perform as suggested by our theory. $\gmhash$ is inferior to $\gmsketch$, which is further inferior to $\gmfactorsketch$.

\subsection{Real-World Problem}
\label{sec:real-world problem}

We also evaluate our algorithms on a real-world problem where the goal is to estimate the probability of a page view. We experiment with two months of data of a medium-sized customer of \emph{Adobe Marketing Cloud}\footnote{http://www.adobe.com/marketing-cloud.html}. This is $65\text{M}$ page views, each of which is described by six variables: \textsc{Country}, \textsc{City}, \textsc{Page Name}, \textsc{Starting Page Name}, \textsc{Campaign}, and \textsc{Browser}. Variable \textsc{Page Name} takes on more than $42\text{k}$ values and has the highest cardinality. We approximate the distribution $P$ over our variables by a naive Bayes model, where the class variable is $X_1 = \textsc{Country}$. Since the behavior of users is often driven by their locations, this approximation is quite reasonable.

All compared algorithms are trained on $1\text{M}$ i.i.d. examples from distribution $P$ and tested on all heavy examples in this sample. We say that the example $x$ is \emph{heavy} when $P(x) > 10^{- 6}$. The rest of the setup is identical to that in \cref{sec:synthetic problem}.

Our results are reported in Figure \ref{fig:results}b. We observe the same trends as in \cref{sec:hard synthetic problem}. The CM sketch performs poorly, and our methods outperform it at the same space for any space from $2^{13}$ to $2^{24}$. Also note that none of the compared methods achieve zero mistakes. This is because our sample size $n$ is not large enough to approximate $P$ well in all heavy $x$. Even if $\hat{P} = \Pmlg$, our methods would still make mistakes.

%!TEX root = Paper.tex

\section{Conclusions}
\label{sec:conclusions}

Structured high-cardinality data arises in many domains. Probability distributions over such data cannot be estimated easily with guarantees by either graphical models \cite{jensen96introduction}, a popular approach to reasoning with structured data; or count sketches \cite{muthukrishnan05data}, a common approach to approximating probabilities in high-cardinality streams of data. We bring together the ideas of graphical models and sketches, and propose three approximations to the MLE in graphical models with high-cardinality variables. We analyze them and prove that our best approximation, $\gmfactorsketch$, outperforms the CM sketch on a class of naive Bayes models. We validate these findings empirically.

The MLE is a common approach to estimating the parameters of graphical models \cite{jensen96introduction}. We propose, analyze, and empirically evaluate multiple space-efficient approximations to this procedure with high-cardinality variables. In this work, we focus solely on the problem of estimating $\Pmlg(x)$, the probability at a single point $x$. However, note that our models are constructed from Bayesian networks, which can answer $P(Y = y)$ for any subset of variables $Y$ with values $y$. We do not analyze such inference queries and leave this for future work.

Our work is the first formal investigation of approximations on the intersection of graphical models and sketches. One of our key results is that $\gmfactorsketch$ yields a constant-factor multiplicative approximation to $\Pmlg(x)$ for any $x$ with probability of at least $1 - \delta$ in $O(K^2 \log(K / \delta) \Delta^{-1}(x))$ space, where $K$ is the number of variables and $\Delta(x)$ reflects the hardness of query $x$. This result is encouraging because the space is only quadratic in $K$ and logarithmic in $1 / \delta$. The space also depends on constant $\Delta(x)$, which can be small. This constant is intrinsic (\cref{sec:lower bound}); and this indicates that the problem of approximating $\Pmlg(x)$ well, for any $\Pmlg$ and $x$, is intrinsically hard.

\clearpage

\bibliographystyle{plain}
\bibliography{References}

%!TEX root = Paper.tex

\clearpage
\onecolumn
\appendix

\section{Proofs of Main Theorems}
\label{sec:proofs}

\subsection{Proof of \cref{thm:hashing}}

First, we prove a supplementary claim that the number of bins $m$ can be set such that:
\begin{align}
  [\Pmlg_1(x_1) - \varepsilon_1] \prod_{k = 2}^K [\Pmlg_k(x_k \mid x_{\parents(k)}) - \varepsilon_k]
  & \leq \hat{P}(x) \label{eq:hashing raw} \\
  & \leq [\Pmlg_1(x_1) + \varepsilon_1] \prod_{k = 2}^K [\Pmlg_k(x_k \mid x_{\parents(k)}) + \varepsilon_k]
  \nonumber
\end{align}
holds with probability of at least $1 - \delta$ for any $\varepsilon_1, \dots, \varepsilon_K > 0$. Then we choose appropriate $\varepsilon_1, \dots, \varepsilon_K$. To prove that \eqref{eq:hashing raw} holds, it suffices to show that inequalities:
\begin{align}
  |\hat{P}_1(x_1) - \Pmlg_1(x_1)| & \leq \varepsilon_1\,,
  \label{eq:hashing good prior} \\
  \forall k \in [K] - \set{1}: |\hat{P}_k(x_k \mid x_{\parents(k)}) - \Pmlg_k(x_k \mid x_{\parents(k)})| & \leq \varepsilon_k
  \label{eq:hashing good conditional}
\end{align}
hold jointly with probability of at least $1 - \delta$.

Clearly $\hat{P}_1(x_1) - \Pmlg_1(x_1) \geq 0$. Therefore, the probability that \eqref{eq:hashing good prior} does not hold is bounded by \cref{lem:small collision} as:
\begin{align}
  P(|\hat{P}_1(x_1) - \Pmlg_1(x_1)| > \varepsilon_1) =
  P(\hat{P}_1(x_1) - \Pmlg_1(x_1) > \varepsilon_1) <
  \frac{1}{m \varepsilon_1}\,.
  \label{eq:hashing prior bound}
\end{align}
Now we fix $k \in [K] - \set{1}$ and bound the probability that \eqref{eq:hashing good conditional} does not hold:
\begin{align*}
  & P(|\hat{P}_k(x_k \mid x_{\parents(k)}) - \Pmlg_k(x_k \mid x_{\parents(k)})| > \varepsilon_k) = \\
  & \quad P\left(\abs{\frac{\bar{c}_k(h_k(x_k + M (x_{\parents(k)} - 1)))}
  {c_{\parents(k)}(h_{\parents(k)}(x_{\parents(k)}))} -
  \frac{\sum_{t = 1}^n \I{x^{(t)}_k = x_k, x^{(t)}_{\parents(k)} = x_{\parents(k)}}}
  {\sum_{t = 1}^n \I{x^{(t)}_{\parents(k)} = x_{\parents(k)}}}}
  > \varepsilon_k\right)\,.
\end{align*}
By \cref{lem:ratio decomposition}, the necessary conditions for $|\hat{P}_k(x_k \mid x_{\parents(k)}) - \Pmlg_k(x_k \mid x_{\parents(k)})| > \varepsilon_k$ are:
\begin{align*}
  \frac{1}{n} c_{\parents(k)}(h_{\parents(k)}(x_{\parents(k)})) -
  \frac{1}{n} \sum_{t = 1}^n \I{x^{(t)}_{\parents(k)} = x_{\parents(k)}} & > \varepsilon_k \alpha_k\,, \\
  \frac{1}{n} \bar{c}_k(h_k(x_k + M (x_{\parents(k)} - 1))) -
  \frac{1}{n} \sum_{t = 1}^n \I{x^{(t)}_k = x_k, x^{(t)}_{\parents(k)} = x_{\parents(k)}} & > \varepsilon_k \alpha_k\,,
\end{align*}
where $\alpha_k = \Pmlg_{\parents(k)}(x_{\parents(k)})$. The first event happens when the denominator of $\hat{P}_k(x_k \mid x_{\parents(k)})$ increases significantly when compared to the denominator of $\Pmlg_k(x_k \mid x_{\parents(k)})$. The second event happens when the numerator increases significantly.

Now we show that the above events are unlikely. The probability of the first event can be bounded by \cref{lem:small collision} as:
\begin{align}
  & P\left(\frac{1}{n} c_{\parents(k)}(h_{\parents(k)}(x_{\parents(k)})) -
  \frac{1}{n} \sum_{t = 1}^n \I{x^{(t)}_{\parents(k)} = x_{\parents(k)}} > \varepsilon_k \alpha_k\right)
  \nonumber \\
  & \qquad < \frac{1}{m \varepsilon_k \alpha_k}
  \label{eq:hashing den bound}
\end{align}
for $X = X_{\parents(k)}$, $h = h_{\parents(k)}$, and $\varepsilon = \varepsilon_k \alpha_k$. The probability of the second event can be bounded by \cref{lem:small collision} as:
\begin{align}
  & P\left(\frac{1}{n} \bar{c}_k(h_k(x_k + M (x_{\parents(k)} - 1))) -
  \frac{1}{n} \sum_{t = 1}^n \I{x^{(t)}_k = x_k, x^{(t)}_{\parents(k)} = x_{\parents(k)}} > \varepsilon_k \alpha_k\right)
  \nonumber \\
  & \qquad < \frac{1}{m \varepsilon_k \alpha_k}
  \label{eq:hashing num bound}
\end{align}
for $X = X_k + M (X_{\parents(k)} - 1)$, $h = h_k$, and $\varepsilon = \varepsilon_k \alpha_k$. Now we chain \eqref{eq:hashing prior bound}, \eqref{eq:hashing den bound}, and \eqref{eq:hashing num bound}; and have by the union that at least one inequality in \eqref{eq:hashing good prior} and \eqref{eq:hashing good conditional} is violated with probability of at most:
\begin{align*}
  \frac{1}{m \varepsilon_1} + \sum_{k = 2}^K \frac{2}{m \varepsilon_k \alpha_k} <
  \frac{1}{m} \sum_{k = 1}^K \frac{2}{\varepsilon_k \alpha_k}\,,
\end{align*}
where $\alpha_1 = 1$. This probability is bounded by $\delta$ for $\displaystyle m \geq \sum_{k = 1}^K \frac{2}{\varepsilon_k \alpha_k \delta}$. This concludes the proof of \eqref{eq:hashing raw}.

Now we choose appropriate $\varepsilon_1, \dots, \varepsilon_K$. In particular, let $\varepsilon_k = 2 K / (\alpha_k \delta m)$ for all $k \in [K]$. Note that this setting is valid for any $m \geq 1$ since:
\begin{align*}
  m \geq
  \sum_{k = 1}^K \frac{2}{\varepsilon_k \alpha_k \delta} =
  \sum_{k = 1}^K \frac{m}{K} = m\,.
\end{align*}
Under this assumption, the upper bound in \eqref{eq:hashing raw} can be written as:
\begin{align*}
  \hat{P}(x)
  & \leq [\Pmlg_1(x_1) + \varepsilon_1] \prod_{k = 2}^K [\Pmlg_k(x_k \mid x_{\parents(k)}) + \varepsilon_k] \\
  & = \left[\Pmlg_1(x_1) + \frac{2 K}{\alpha_1 \delta m}\right]
  \prod_{k = 2}^K \left[\Pmlg_k(x_k \mid x_{\parents(k)}) + \frac{2 K}{\alpha_k \delta m}\right] \\
  & = \left[\Pmlg_1(x_1) \prod_{k = 2}^K \Pmlg_k(x_k \mid x_{\parents(k)})\right]
  \left[1 + \frac{2 K}{\Pmlg_1(x_1) \delta m}\right]
  \prod_{k = 2}^K \left[1 + \frac{2 K}{\Pmlg_k(x_k, x_{\parents(k)}) \delta m}\right]\,.
\end{align*}
Along the same lines, the lower bound in \eqref{eq:hashing raw} can be written as:
\begin{align*}
  \hat{P}(x)
  & \geq [\Pmlg_1(x_1) - \varepsilon_1] \prod_{k = 2}^K [\Pmlg_k(x_k \mid x_{\parents(k)}) - \varepsilon_k] \\
  & = \left[\Pmlg_1(x_1) - \frac{2 K}{\alpha_1 \delta m}\right]
  \prod_{k = 2}^K \left[\Pmlg_k(x_k \mid x_{\parents(k)}) - \frac{2 K}{\alpha_k \delta m}\right] \\
  & = \left[\Pmlg_1(x_1) \prod_{k = 2}^K \Pmlg_k(x_k \mid x_{\parents(k)})\right]
  \left[1 - \frac{2 K}{\Pmlg_1(x_1) \delta m}\right]
  \prod_{k = 2}^K \left[1 - \frac{2 K}{\Pmlg_k(x_k, x_{\parents(k)}) \delta m}\right]\,.
\end{align*}
This concludes our proof.

\subsection{Proof of \cref{thm:sketch}}

Algorithm $\gmsketch$ estimates the probability as a median of $d$ probabilities:
\begin{align*}
  \hat{P}(x) = \median_{i \in [d]} \hat{P}^i(x)\,,
\end{align*}
each of which is estimated by a random instance of $\gmhash$. We bound the probability that $\hat{P}(x)$ is a good approximation of $\Pmlg(x)$:
\begin{align*}
  \Pmlg(x) \prod_{k = 1}^K (1 - \varepsilon_k) \leq
  \hat{P}(x) \leq
  \Pmlg(x) \prod_{k = 1}^K (1 + \varepsilon_k)\,,
\end{align*}
where $\varepsilon_k$ are defined in \cref{thm:hashing}, using the so-called median trick. Let:
\begin{align*}
  Z_i = \I{\Pmlg(x) \prod_{k = 1}^K (1 - \varepsilon_k) \leq \hat{P}^i(x) \leq \Pmlg(x) \prod_{k = 1}^K (1 + \varepsilon_k)}
\end{align*}
indicate the event that $\hat{P}^i(x)$ approximates $\Pmlg(x)$ well. In addition, let $\bar{Z} = \frac{1}{d} \sum_{i = 1}^d Z_i$ and $\EE{\bar{Z}} \geq 1 / 2$, where the expectation is with respect to random hashes $h^1, \dots, h^d$. Then by Hoeffding's inequality:
\begin{align*}
  P(\EE{\bar{Z}} - \bar{Z} > \EE{\bar{Z}} - 1 / 2) < \exp[- 2 (\EE{\bar{Z}} - 1 / 2)^2 d]\,,
\end{align*}
where $\EE{\bar{Z}} - \bar{Z} > \EE{\bar{Z}} - 1 / 2$ is the event that $\hat{P}(x)$ is not a good approximation of $\Pmlg(x)$. By setting $\delta = 1 / 4$ in \cref{thm:hashing}, we get that $\EE{\bar{Z}} \geq 3 / 4$ and therefore:
\begin{align*}
  P(\EE{\bar{Z}} - \bar{Z} > \EE{\bar{Z}} - 1 / 2) <
  \exp[- 2 (3 / 4 - 1 / 2)^2 d] =
  \exp[- d / 8]\,.
\end{align*}
Now we select $d \geq 8 \log(1 / \delta)$ and get that $\hat{P}(x)$ is a not a good approximation of $\Pmlg(x)$ with probability of at most $\delta$.

\subsection{Proof of \cref{thm:factor sketch}}

The key idea of this proof is similar to that of \cref{thm:hashing}. First, we prove a supplementary claim that the number of bins $m$ can be chosen such that:
\begin{align}
  [\Pmlg_1(x_1) - \varepsilon_1] \prod_{k = 2}^K [\Pmlg_k(x_k \mid x_{\parents(k)}) - \varepsilon_k]
  & \leq \hat{P}(x) \label{eq:factor sketch raw} \\
  & \leq [\Pmlg_1(x_1) + \varepsilon_1] \prod_{k = 2}^K [\Pmlg_k(x_k \mid x_{\parents(k)}) + \varepsilon_k]
  \nonumber
\end{align}
holds with probability of at least $1 - \delta$ for any $\varepsilon_1, \dots, \varepsilon_K > 0$. Then we choose appropriate $\varepsilon_1, \dots, \varepsilon_K$. To prove that \eqref{eq:factor sketch raw} holds, it suffices to show that inequalities:
\begin{align*}
  |\hat{P}_1(x_1) - \Pmlg_1(x_1)| & \leq \varepsilon_1\,, \\
  \forall k \in [K] - \set{1}: |\hat{P}_k(x_k \mid x_{\parents(k)}) - \Pmlg_k(x_k \mid x_{\parents(k)})| & \leq \varepsilon_k
\end{align*}
hold jointly with probability of at least $1 - \delta$. By \cref{lem:ratio decomposition} and the union bound, this is equivalent to showing that each of the following inequalities:
\begin{align*}
  \hat{P}_1(x_1) - \Pmlg_1(x_1)
  & \leq \varepsilon_1 \alpha_1\,, \\
  \forall k \in [K] - \set{1}: \hat{P}_{\parents(k)}(x_{\parents(k)}) - \Pmlg_{\parents(k)}(x_{\parents(k)})
  & \leq \varepsilon_k \alpha_k\,, \\
  \forall k \in [K] - \set{1}: \hat{P}_k(x_k, x_{\parents(k)}) - \Pmlg_k(x_k, x_{\parents(k)})
  & \leq \varepsilon_k \alpha_k
\end{align*}
is violated with probability of at most $\delta / (2 K)$, where $\alpha_1 = 1$ and $\alpha_k = \Pmlg_{\parents(k)}(x_{\parents(k)})$ for any $k \in [K] - \set{1}$. Now note that each $\hat{P}$ is the CM sketch of the corresponding $\Pmlg$. So, by Theorem 1 of Cormode and Muthukrishnan \cite{cormode05improved}, each of the above inequalities is violated with at most  $\delta / (2 K)$ probability when the number of hash functions satisfies $d \geq \log(2 K / \delta)$ and the number of hashing bins $m$ satisfies:
\begin{align*}
  m & \geq \frac{e}{\varepsilon_1 \alpha_1}\,, \\
  \forall k \in [K] - \set{1}: m & \geq \frac{e}{\varepsilon_k \alpha_k}\,.
\end{align*}
To satisfy the above inequalities, we select appropriate $\varepsilon_1, \dots, \varepsilon_K$. Let $\varepsilon_k = e / (\alpha_k m)$ for all $k \in [K]$. This setting is valid for any $m \geq 1$ and $k \in [K]$ since:
\begin{align*}
  m \geq
  \frac{e}{\varepsilon_k \alpha_k} =
  m\,.
\end{align*}
Under this assumption, the upper bound in \eqref{eq:factor sketch raw} can be written as:
\begin{align*}
  \hat{P}(x)
  & \leq [\Pmlg_1(x_1) + \varepsilon_1] \prod_{k = 2}^K [\Pmlg_k(x_k \mid x_{\parents(k)}) + \varepsilon_k] \\
  & = \left[\Pmlg_1(x_1) + \frac{e}{\alpha_1 m}\right]
  \prod_{k = 2}^K \left[\Pmlg_k(x_k \mid x_{\parents(k)}) + \frac{e}{\alpha_k m}\right] \\
  & = \left[\Pmlg_1(x_1) \prod_{k = 2}^K \Pmlg_k(x_k \mid x_{\parents(k)})\right]
  \left[1 + \frac{e}{\Pmlg_1(x_1) m}\right]
  \prod_{k = 2}^K \left[1 + \frac{e}{\Pmlg_k(x_k, x_{\parents(k)}) m}\right]\,.
\end{align*}
Along the same lines, the lower bound in \eqref{eq:factor sketch raw} can be written as:
\begin{align*}
  \hat{P}(x)
  & \geq [\Pmlg_1(x_1) - \varepsilon_1] \prod_{k = 2}^K [\Pmlg_k(x_k \mid x_{\parents(k)}) - \varepsilon_k] \\
  & = \left[\Pmlg_1(x_1) - \frac{e}{\alpha_1 m}\right]
  \prod_{k = 2}^K \left[\Pmlg_k(x_k \mid x_{\parents(k)}) - \frac{e}{\alpha_k m}\right] \\
  & = \left[\Pmlg_1(x_1) \prod_{k = 2}^K \Pmlg_k(x_k \mid x_{\parents(k)})\right]
  \left[1 - \frac{e}{\Pmlg_1(x_1) m}\right]
  \prod_{k = 2}^K \left[1 - \frac{e}{\Pmlg_k(x_k, x_{\parents(k)}) m}\right]\,.
\end{align*}
This concludes our proof.

\section{Technical Lemmas}
\label{sec:lemmas}

\begin{lemma}
\label{lem:ratio decomposition} Let:
\begin{align*}
  \abs{\frac{u_h}{v_h} - \frac{u}{v}} > \varepsilon
\end{align*}
for any $u_h \geq u$, $v_h \geq v$, $v \geq u$, and $v \geq \alpha n$. Then either $v_h - v > \varepsilon \alpha n$ or $u_h - u > \varepsilon \alpha n$.
\end{lemma}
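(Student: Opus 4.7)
The plan is to prove the contrapositive: assume both $u_h - u \leq \varepsilon \alpha n$ and $v_h - v \leq \varepsilon \alpha n$, and deduce that $\abs{u_h / v_h - u / v} \leq \varepsilon$. The key algebraic identity I will exploit is the ``telescoping'' rewrite
\begin{align*}
  \frac{u_h}{v_h} - \frac{u}{v} = \frac{u_h v - u v_h}{v v_h} = \frac{(u_h - u) v - u (v_h - v)}{v v_h}\,,
\end{align*}
which expresses the gap between the two ratios in terms of the ``excesses'' $u_h - u$ and $v_h - v$ that the hypotheses directly control.

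I split into two cases according to the sign of $u_h / v_h - u / v$. In the positive case, the numerator $(u_h - u) v - u (v_h - v)$ is nonnegative, and since $u \geq 0$ and $v_h \geq v$, I can drop the $-u(v_h - v)$ term to obtain the upper bound $(u_h - u) v / (v v_h) = (u_h - u)/v_h \leq (u_h - u)/v$. Combined with $v \geq \alpha n$ and the assumed bound $u_h - u \leq \varepsilon \alpha n$, this yields $u_h/v_h - u/v \leq \varepsilon$. In the negative case, I symmetrically write
\begin{align*}
  \frac{u}{v} - \frac{u_h}{v_h} = \frac{u (v_h - v) - (u_h - u) v}{v v_h}\,,
\end{align*}
and use the nonnegativity of this expression plus $u \leq v$ to drop the $-(u_h - u) v$ term and replace $u$ by $v$ in the numerator, giving the upper bound $(v_h - v)/v_h \leq (v_h - v)/v \leq (v_h - v)/(\alpha n) \leq \varepsilon$.

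Combining the two cases gives $\abs{u_h/v_h - u/v} \leq \varepsilon$, contradicting the hypothesis of the lemma, so one of $u_h - u > \varepsilon \alpha n$ or $v_h - v > \varepsilon \alpha n$ must hold. The main obstacle is not difficulty but bookkeeping: ensuring in each case that the inequality $u \leq v$ and monotonicity of $u_h, v_h$ are applied to drop exactly the right term with the right sign, so that the remaining bound can be closed against $\alpha n$ via $v \geq \alpha n$. Once the two cases are handled consistently, the argument is a short direct computation and requires no probabilistic ingredients.
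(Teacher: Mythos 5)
Your proof is correct and follows essentially the same route as the paper's: the same case split on the sign of $u_h/v_h - u/v$, with the positive branch reduced to $(u_h-u)/v \leq (u_h-u)/(\alpha n)$ and the negative branch to $(v_h-v)/v \leq (v_h-v)/(\alpha n)$ using $u \leq v$, $v_h \geq v$, and $v \geq \alpha n$. The only differences are cosmetic: you argue by contrapositive rather than by two separate contradictions, and you organize the algebra via the common-denominator identity $\frac{u_h}{v_h}-\frac{u}{v} = \frac{(u_h-u)v - u(v_h-v)}{v v_h}$ instead of the paper's multiplicative factorings.
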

\begin{proof}
The proof is by contradiction. First, note that $\displaystyle \abs{\frac{u_h}{v_h} - \frac{u}{v}} > \varepsilon$ implies that either:
\begin{align*}
  \frac{u_h}{v_h} - \frac{u}{v} > \varepsilon \qquad \text{or} \qquad
  \frac{u}{v} - \frac{u_h}{v_h} > \varepsilon\,.
\end{align*}
Now we argue that $u_h / v_h - u / v > \varepsilon$ implies $u_h - u > \varepsilon \alpha n$. Suppose that this is not true. Then the opposite must be true, $u_h / v_h - u / v > \varepsilon$ and $u_h - u \leq \varepsilon \alpha n$. We derive contradiction by bounding $\varepsilon$ from above as:
\begin{align*}
  \varepsilon <
  \frac{u_h}{v_h} - \frac{u}{v} =
  \underbrace{\frac{v}{v_h}}_{\leq 1} \frac{u_h}{v} - \frac{u}{v} \leq
  \frac{u_h - u}{v} \leq
  \frac{u_h - u}{\alpha n}\,.
\end{align*}
Now we argue that $u / v - u_h / v_h > \varepsilon$ implies $v_h - v > \varepsilon \alpha n$. Suppose that this is not true. Then the opposite must be true, $u / v - u_h / v_h > \varepsilon$ and $v_h - v \leq \varepsilon \alpha n$. We derive contradiction by bounding $\varepsilon$ from above as:
\begin{align*}
  \varepsilon <
  \frac{u}{v} - \frac{u_h}{v_h} =
  \frac{u}{v} - \underbrace{\frac{u_h}{u}}_{\geq 1} \frac{u}{v_h} \leq
  \underbrace{\frac{u}{v}}_{\leq 1} \frac{v_h - v}{v_h} \leq
  \frac{v_h - v}{\alpha n}\,.
\end{align*}
The last step follows from $v_h \geq v \geq \alpha n$. This concludes our proof.
\end{proof}

\begin{lemma}
\label{lem:small collision} Let $X$ be a discrete random variable on $\naturalset$ and $(x^{(t)})_{t = 1}^n$ be its $n$ observations. Let $h: \naturalset \to [m]$ be any random hash function. Then for any $x \in \naturalset$, $m \geq 1$, and $\varepsilon \in (0, 1)$:
\begin{align*}
  P\left(\frac{1}{n} \sum_{t = 1}^n \I{h(x^{(t)}) = h(x)} -
  \frac{1}{n} \sum_{t = 1}^n \I{x^{(t)} = x} > \varepsilon\right) < \frac{1}{m \varepsilon}\,,
\end{align*}
where the randomness is with respect to $h$.
\end{lemma}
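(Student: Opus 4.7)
\medskip

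\noindent\textbf{Proof proposal.} The plan is to rewrite the random quantity inside the probability as an average of nonnegative collision indicators, bound its expectation using pairwise independence of $h$, and then apply Markov's inequality.

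First, I would define, for each $t \in [n]$,
\begin{align*}
  Y_t = \I{h(x^{(t)}) = h(x)} - \I{x^{(t)} = x}\,.
\end{align*}
The key observation is that $Y_t \ge 0$: whenever $x^{(t)} = x$ we automatically have $h(x^{(t)}) = h(x)$, so the two indicators coincide; and whenever $x^{(t)} \ne x$, the second indicator is $0$ while the first is nonnegative. In particular, $Y_t = \I{h(x^{(t)}) = h(x)}\,\I{x^{(t)} \ne x}$. Hence $\frac{1}{n}\sum_t Y_t$ is a nonnegative random variable (randomness over $h$; the observations $x^{(t)}$ and the query $x$ are fixed).

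Next I would compute the expectation. For any $t$ with $x^{(t)} \ne x$, pairwise independence of $h$ gives $\mathbb{E}[\I{h(x^{(t)}) = h(x)}] \le 1/m$; for $t$ with $x^{(t)} = x$, $Y_t = 0$. Letting $S = \{t : x^{(t)} \ne x\}$, this yields
\begin{align*}
  \mathbb{E}\!\left[\frac{1}{n}\sum_{t = 1}^n Y_t\right]
  = \frac{1}{n}\sum_{t \in S} \mathbb{E}[\I{h(x^{(t)}) = h(x)}]
  \le \frac{|S|}{n\,m}
  \le \frac{1}{m}\,.
\end{align*}

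Finally, Markov's inequality applied to the nonnegative random variable $\frac{1}{n}\sum_t Y_t$ gives
\begin{align*}
  P\!\left(\frac{1}{n}\sum_{t=1}^n Y_t > \varepsilon\right)
  \le \frac{1}{m\varepsilon}\,,
\end{align*}
and the strict inequality in the statement follows by taking any $\varepsilon' \in (\varepsilon, 1)$ with $\{Y > \varepsilon\} \subset \{Y \ge \varepsilon'\}$ discretely, or simply by noting that when $|S| < n$ the expectation is strictly below $1/m$.

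The proof is essentially routine once the telescoping step $Y_t = \I{\text{collision}, x^{(t)} \ne x}$ is spotted; the only subtlety worth flagging is that pairwise independence is needed only in the one-sided form $P(h(y) = h(x)) \le 1/m$ for $y \ne x$, so no stronger hash family is required, and the bound is independent of the distribution of the stream because we condition on the observations.
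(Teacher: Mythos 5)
Your proposal is correct and follows essentially the same route as the paper's proof: both rest on the observation that the difference is a nonnegative sum of collision indicators $\I{h(x^{(t)}) = h(x),\, x^{(t)} \neq x}$, each with expectation at most $1/m$, followed by Markov's inequality. The only cosmetic difference is that you bound the expectation before invoking Markov (and you flag the strictness of the inequality, which the paper glosses over); the substance is identical.
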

\begin{proof}
Clearly:
\begin{align*}
  \frac{1}{n} \sum_{t = 1}^n \I{h(x^{(t)}) = h(x)} - \frac{1}{n} \sum_{t = 1}^n \I{x^{(t)} = x} \geq 0
\end{align*}
because $x^{(t)} = x$ implies that $h(x^{(t)}) = h(x)$ for any $h: \naturalset \to [m]$. Therefore, we can apply Markov's inequality and get:
\begin{align*}
  & P\left(\frac{1}{n} \sum_{t = 1}^n \I{h(x^{(t)}) = h(x)} - \frac{1}{n} \sum_{t = 1}^n \I{x^{(t)} = x} > \varepsilon\right) \\
  & \quad < \frac{1}{\varepsilon n} \EE{\sum_{t = 1}^n \I{h(x^{(t)}) = h(x)} - \sum_{t = 1}^n \I{x^{(t)} = x}} \\
  & \quad = \frac{1}{\varepsilon n} \sum_{t = 1}^n \EE{\I{h(x^{(t)}) = h(x), \ x^{(t)} \neq x}}\,,
\end{align*}
where the last equality is by the linearity of expectation. Since $h$ is random, the probability that $h(x^{(t)}) = h(x)$ when $x^{(t)} \neq x$ is $1 / m$. Therefore:
\begin{align*}
  \EE{\I{h(x^{(t)}) = h(x), \ x^{(t)} \neq x}} \leq 1 / m
\end{align*}
and we conclude that:
\begin{align*}
  P\left(\frac{1}{n} \sum_{t = 1}^n \I{h(x^{(t)}) = h(x)} - \frac{1}{n} \sum_{t = 1}^n \I{x^{(t)} = x} > \varepsilon\right) <
  \frac{1}{\varepsilon m}\,.
\end{align*}
\end{proof}

\end{document}